\newcommand*{\tw}{\textsc{2}}
\newcommand*{\twdfa}{\tw\textsc{dfa}}
\newcommand*{\twdfas}{\tw\textsc{dfa}s}
\newcommand*{\ow}{\textsc{1}}
\newcommand*{\owdfa}{\ow\textsc{dfa}}
\newcommand*{\owdfas}{\ow\textsc{dfa}s}
\newcommand*{\ownfa}{\ow\textsc{nfa}}
\newcommand*{\ownfas}{\ow\textsc{nfa}s}
\newcommand*{\cfg}{\textsc{cfg}}
\newcommand*{\cfgs}{\textsc{cfg}s}
\newcommand*{\cnfg}{\textsc{Cnfg}}
\newcommand*{\cnfgs}{\textsc{Cnfg}s}
\newcommand*{\cfls}{\textsc{cfl}s}
\newcommand{\derivesForm}[2]{\mathrel{\underset{\scriptstyle #1}{\overset{\scriptstyle #2}{\Longrightarrow}}}}
\newcommand{\DerivesGrammar}[1]{\derivesForm{#1}{+}}
\newcommand{\DERIVESGrammar}[2]{\derivesForm{#1}{#2}}
\newcommand{\derGrammar}[1]{\DERIVESGrammar{#1}{}} 
\newcommand{\DerivesG}{\DerivesGrammar{G}} 
\newcommand{\derG}{\derGrammar{G}} 
\newcommand{\DerivesZ}{\DerivesGrammar{G_0}} 
\newcommand{\derZ}{\derGrammar{G_0}} 
\newcommand{\Pred}{\ensuremath{\mathrm{Pred}}}
\newcommand{\Pref}{\ensuremath{\mathrm{Pref}}}
\def\Vec#1{\mbox{\boldmath $#1$}}
\newtheorem{theorem}{Theorem}[section]
\newtheorem{lemma}[theorem]{Lemma}
\newtheorem*{claim}{Claim}     
\newtheorem{ex}[theorem]{Example}
\newenvironment{example}{\begin{ex}\rm}{\end{ex}}
\begin{document}%
\title{Converting Nondeterministic Automata and Context-Free Grammars into Parikh Equivalent One-Way and
Two-Way Deterministic Automata\footnote{Preliminary version presented at
  \emph{DLT'12---\,Developments in Language Theory}, Taipei,
  Taiwan, Aug 14--17, 2012 [\emph{Lect.\ Notes Comput.\
  Sci.},\ 7410, pp.\ 284--295, Springer-Verlag, 2012]\@.}%
}%
\author{Giovanna J. Lavado\footnotemark[3]\and
Giovanni Pighizzini\footnotemark[3]\and
Shinnosuke Seki\footnotemark[4]\and
\mbox{}\\
{\normalsize \mbox{\footnotemark[3] }\,Dipartimento di Informatica}\\
{\normalsize  Universit\`{a} degli Studi di Milano, Italy}\\
{\small\sf giovanna.lavado@unimi.it} --
{\small\sf giovanni.pighizzini@unimi.it}\\[2ex]%
{\normalsize \mbox{\footnotemark[4] }\,Department of Information and Computer Science}\\
{\normalsize Aalto University, Finland}\\
{\small\sf shinnosuke.seki@aalto.fi}
}
\date{}%
\maketitle\thispagestyle{empty}
\maketitle\thispagestyle{empty}
\begin{quotation}\small\noindent
  \textbf{Abstract.}\hspace{\labelsep}%
We investigate the conversion of one-way nondeterministic finite automata and context-free grammars into
Parikh equivalent one-way and two-way deterministic finite automata, {}from a descriptional complexity point of view.

We prove that for each one-way nondeterministic automaton with $n$ states there exist Parikh equivalent
one-way and two-way deterministic automata with $e^{O(\sqrt{n \cdot \ln n})}$ and~$p(n)$ states, respectively,
where~$p(n)$ is a polynomial. Furthermore, these costs are tight.
In contrast, if all the words accepted by the given automaton contain at least two different
letters, then a Parikh equivalent one-way deterministic automaton with a polynomial number of states can
be found.

Concerning context-free grammars, we prove that for each grammar in Chomsky normal form with $h$ 
variables there exist Parikh equivalent one-way and two-way deterministic automata with~$2^{O(h^2)}$ 
and~$2^{O(h)}$ states, respectively.
Even these bounds are tight.\\
\mbox{}\\
\textbf{Keywords:}\hspace{\labelsep}%
  finite automaton,
	context-free grammar, 
	Parikh's theorem,
	descriptional complexity,
	semilinear set, 
	Parikh equivalence
\end{quotation}

	\section{Introduction}

It is well-known that the state cost of the conversion of nondeterministic finite automata (\ownfas) into
equivalent deterministic finite automata (\owdfas) is exponential:
using the classical subset construction~\cite{RabinScott1959}, {}from each $n$-state \ownfa\ we can build 
an equivalent \owdfa\ with $2^n$ states.
Furthermore, this cost cannot be reduced.

In all examples witnessing such a state gap (e.g.,~\cite{Lupanov1963,MeyerFischer1971,Moore1971}),  
input alphabets with at least two letters
and proof arguments strongly relying on the structure of words are used.
As a matter of fact, for the unary case, namely the case of the one letter input alphabet, the cost
reduces to $e^{\Theta(\sqrt{n \cdot \ln n})}$, as shown by Chrobak~\cite{Chrobak1986}.

\begin{quote}
\emph{What happens if we do not care of the order of symbols in the words, i.e., if we are interested
only in obtaining \owdfas\ accepting sets of words which are equal, after permuting
the symbols, to the words accepted by the given \ownfas?}
\end{quote}

This question is related to the well-known notions of Parikh image and Parikh equivalence~\cite{Parikh1966},
which have been extensively investigated in the literature
(e.g., \cite{Goldstine77,AcetoEzikIngo02}) even for the connections of semilinear
sets~\cite{Huynh80} and with other fields of investigation as, e.g., 
Presburger Arithmetics~\cite{GinsburgSpanier66}, Petri Nets~\cite{Esparza97}, logical formulas~\cite{VermaSeidlSchwentick05},
formal verification~\cite{To10b}.

We remind the reader that two words over a same alphabet $\Sigma$ are Parikh equivalent if and only if they are equal up to a
permutation of their symbols or, equivalently, for each letter $a\in\Sigma$, the number of occurrences
of $a$ in the two words is the same.
This notion extends in a natural way to languages (two languages $L_1$ and $L_2$ are Parikh equivalent when
for each word in $L_1$ there is a Parikh equivalent word in $L_2$ and vice versa) and to formal
systems which are used to specify languages as, for instance, grammars and automata.
Notice that in the unary case Parikh equivalence is just the standard equivalence. So, in the unary case,
the answer to our previous question  is given by the above mentioned result by Chrobak.

Our first contribution in this paper is an answer to that question in the general case.
In particular, we prove that the state cost of the conversion of $n$-state \ownfas\ into Parikh equivalent \owdfas\ is the
same as in the unary case, i.e., it is $e^{\Theta(\sqrt{n \cdot \ln n})}$.
More surprisingly, we prove that this is due to the unary parts of languages.
In fact, we show that if the given \ownfa\ accepts only nonunary words, i.e., each accepted word contains at
least two different letters, then we can obtain a Parikh equivalent \owdfa\ with a 
polynomial number of states in $n$.
Hence, while in standard determinization the most difficult part (with respect to the state complexity)
is the nonunary one, in the ``Parikh determinization'' this part becomes easy and the most complex part is
the unary one.

In the second part of the paper we consider context-free grammars (\cfgs).
Parikh Theorem~\cite{Parikh1966} states that each context-free language is Parikh equivalent to a
regular language.
We study this equivalence {}from a descriptional complexity point of view.\footnote{For an introductory
survey to \emph{descriptional complexity}, we address the reader to~\cite{GoldstineKappesKLMW2002}.}
Recently, Esparza, Ganty, Kiefer, and Luttenberger proved that each \cfg~$G$ in Chomsky normal
form with $h$ variables can be converted into a Parikh equivalent \ownfa\ with $O(4^h)$ 
states~\cite{EsGaKiLu2011}.
In~\cite{LavadoPighizzini2012} it was proved that if $G$ generates a bounded language then we can 
obtain a \owdfa\ with $2^{h^{O(1)}}$ states, i.e., a number exponential in a polynomial of the number of
variables.
In this paper, we are able to extend such a result by removing the restriction to bounded languages.
We also reduce the upper bound to $2^{O(h^2)}$.
A milestone for obtaining this result is the conversion of \ownfas\ to Parikh equivalent \owdfas\
presented in the first part of the paper. By suitably combining that conversion (in particular the polynomial conversion in the case of \ownfas\ accepting nonunary words) with the above mentioned result 
{}from~\cite{EsGaKiLu2011} and with a result by Pighizzini, Shallit, and Wang~\cite{PighizziniShallitWang2002} concerning
the unary case, we prove that each context-free grammar in Chomsky normal form
with $h$ variables can be converted into
a Parikh equivalent \owdfa\ with $2^{O(h^2)}$ states. {}From the results concerning the unary case, it
follows that this bound is tight.

Even for this simulation, as for that of \ownfas\ by Parikh equivalent \owdfas, the main contribution to
the state complexity of the resulting automaton is given by the unary  part.

Finally, we consider conversions of \ownfas\ and \cfgs\ into Parikh equivalent \emph{two-way deterministic
automata}~(\twdfas). Due to the fact that in the unary case these conversions are less expensive than the corresponding
ones into \owdfas, we are able to prove that each $n$-state  \ownfa\ can be converted into an equivalent \twdfa\
with a number of states polynomial in~$n$, and each context-free grammar in Chomsky normal form with~$h$ variables
can be converted into a Parikh equivalent \twdfa\ with a number of states exponential in~$h$.

	\section{Preliminaries}
	\label{sec:preliminaries}

We assume the readers to be familiar with the basic notions and properties from automata and formal language
theory. We remind just a few notions, addressing the reader to standard textbooks (e.g., \cite{HopcroftUllman1979,Shallit2008}) 
for further details.

\smallskip

Let $\Sigma = \{a_1, a_2, \ldots, a_m\}$ be an alphabet of $m$ letters. 
Let us denote by~$\Sigma^*$ the set of all words over $\Sigma$ including the \emph{empty word} $\varepsilon$. 
Given a word $w \in \Sigma^*$, $|w|$ denotes its \emph{length} and, for a letter $a \in \Sigma$, $|w|_a$ denotes 
the \emph{number of occurrences} of $a$ in $w$. 
For a word $u \in \Sigma^*$, $w$ is a \emph{prefix} of $u$ if $u = wx$ for some word $x \in \Sigma^*$.
If $x\neq\epsilon$ then $w$ is a \emph{proper prefix} of $u$.
We denote by $\Pref(u)$ the set of all prefixes of $u$, and for a language $L \subseteq \Sigma^*$, let
\[
\Pref(L) = \bigcup_{u \in L} \Pref(u)\,.
\] 
A language $L$ has the \emph{prefix property} or, equivalently, is said to be \emph{prefix-free} if and only if
for each word $x\in L$, each proper prefix of $x$ does not belong to~$L$.
Given two sets $A,B$ and a function $f:A\rightarrow\Sigma^*$, we say that $f$ has the prefix property on $B$ if
and only if the language $f(A\cap B)$ has the prefix property.

\smallskip

In the paper we consider:
\begin{itemize}
\item \emph{one-way deterministic} and \emph{one-way nondeterministic} finite automata (abbreviated as \owdfas\ and \ownfas, respectively),
\item \emph{two-way deterministic} finite automata (\twdfas),
\item \emph{context-free grammars} (\cfgs) and \emph{context-free languages} (\cfls).
\end{itemize}
While in \emph{one-way automata} the input is scanned from left to right, 
until reaching the end of the input, where the word is accepted or rejected,
in \emph{two-way automata}\label{p:twdfa} the head can be moved in both directions.
At each step, depending on the current state and scanned input symbol and according to the transition function, the 
internal state is changed and 
the head is moved one position leftward, one position rightward or it is kept on the same
cell. In order to locate
the left and the right ends of the input, the word is given on the tape surrounded by two special symbols,
the \emph{left and right endmarkers}. We assume that a \twdfa\ starts the computation in a designed initial state,
scanning the first input symbol and that its head cannot violate the endmarkes, namely, there are no transitions
reading the left (right) endmarker and moving to the left (right, respectively).
In the literature, several slightly different acceptance conditions for two-way automata have been considered.
Here, we assume that a \twdfa\ accepts the input by entering in a special state $q_f$ which is also halting.
We use~$L(A)$ to denote the \emph{language accepted} or \emph{defined} by an automaton~$A$.

\smallskip

A \cfg\ $G$ is denoted by a quadruple $(V, \Sigma, P, S)$, where $V$ is the set of variables, 
$\Sigma$ is the terminal alphabet, $P$ is the set of productions, 
and $S \in V$ is the start variable. 
By $L(G)$ we denote the \emph{language generated} or \emph{defined} by~$G$, namely the set of all words in $\Sigma^*$ that have at least one derivation by $G$ {}from $S$.  
$G$ is said to be in \emph{Chomsky normal form} if all of its productions are in one of the three simple forms, 
either $B \to CD$, $B \to a$, or $S \to \varepsilon$, where $a \in \Sigma$, $B \in V$, and $C, D \in V \setminus \{S\}$. 
\cfgs\ in Chomsky normal form are called {\it Chomsky normal form grammars} (\cnfgs). 
According to the discussion in~\cite{Gruska1973}, we employ the number of variables of \cnfgs\ as a ``reasonable'' measure of descriptional complexity for~\cfls. 

\medskip
A word is said to be \emph{unary} if it consists of $k\geq 0$ occurrences of a same symbol,
otherwise it is said to be \emph{nonunary}.
A language $L$ is \emph{unary} if $L\subseteq\{a\}^*$ for some letter $a$.
In a similar way, automata and \cfgs\ are \emph{unary} when their input and terminal alphabets, respectively, consist of
just one symbol.

Given an alphabet $\Sigma=\{a_1,a_2,\ldots,a_m\}$ and a language $L\subseteq\Sigma^*$,
the \emph{unary parts} of $L$ are the  languages
\[
L_1=L\cap\{a_1\}^*,~L_2=L\cap\{a_2\}^*,\ldots,~L_m=L\cap\{a_m\}^*\,,
\]
while the \emph{nonunary part} is the language
\[
L_0=L-\left(L_1\cup L_2\cup\ldots\cup L_m\right)\,,
\]
i.e., the set which consists of all nonunary words belonging to the language~$L$.
Clearly,~$L=\bigcup_{i=0}^{m}L_i$.

\medskip

We denote the set of integers by~$\mathbb{Z}$ and the set of nonnegative integers by~$\mathbb{N}$.
Then $\mathbb{Z}^m$ and $\mathbb{N}^m$ denote the corresponding sets of $m$-dimensional integer vectors 
including the {\it null vector} $\Vec{0} = (0, 0, \ldots, 0)$. 
For $1 \le i \le m$, we denote the $i$-th component of a vector $\Vec{v}$ by $\Vec{v}[i]$. 

Given $k$ vectors $\Vec{v}_1,\ldots,\Vec{v}_k\in\mathbb{Z}^m$, we say that they are \emph{linearly independent}
if and only if for all $n_1,\dots n_k\in\mathbb{Z}$, 
$n_1\Vec{v}_1+\cdots+n_k\Vec{v}_k=\Vec{0}$ implies $n_1=\ldots=n_k=0$.
It is well-known that, in this case, $k$ cannot exceed~$m$.
The following result will be used in the paper.

\begin{lemma}\label{lem:indep}
  Given $k$ linearly independent vectors $\Vec{v}_1,\ldots,\Vec{v}_k\in\mathbb{Z}^m$ there are $k$ pairwise
  different integers $t_1,\ldots,t_k\in\{1,\ldots,m\}$ such that $\Vec{v}_j[t_j]\neq 0$, for $j=1,\ldots,k$.
\end{lemma}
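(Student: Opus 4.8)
The plan is to reinterpret the statement combinatorially: we are seeking a system of distinct representatives for the \emph{supports} of the vectors, where the support of $\Vec{v}_j$ is the set of coordinates $t$ with $\Vec{v}_j[t]\neq 0$. I would arrange $\Vec{v}_1,\ldots,\Vec{v}_k$ as the rows of a $k\times m$ integer matrix $M$. Since the vectors are linearly independent, $M$ has rank $k$, so by a standard fact of linear algebra (row rank equals column rank) it contains a nonsingular $k\times k$ submatrix $D$; let its columns be indexed by the pairwise distinct integers $c_1,\ldots,c_k\in\{1,\ldots,m\}$, so that the $(j,i)$ entry of $D$ is $\Vec{v}_j[c_i]$.

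The key step is to exploit the Leibniz expansion of the determinant of this submatrix,
\[
\det D = \sum_{\sigma} \mathrm{sgn}(\sigma)\prod_{j=1}^{k}\Vec{v}_j[c_{\sigma(j)}],
\]
where $\sigma$ ranges over the permutations of $\{1,\ldots,k\}$. Since $D$ is nonsingular, this sum is nonzero, so at least one summand must be nonzero; that is, there is a permutation $\sigma$ with $\Vec{v}_j[c_{\sigma(j)}]\neq 0$ for every $j$. Setting $t_j=c_{\sigma(j)}$ then yields the desired indices: they are pairwise distinct because $\sigma$ is a bijection and $c_1,\ldots,c_k$ are themselves distinct, and $\Vec{v}_j[t_j]\neq 0$ holds by construction.

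I expect the only delicate point to be justifying the existence of a nonsingular $k\times k$ submatrix; this is immediate from the rank-$k$ hypothesis but should be stated explicitly. An alternative, more combinatorial route avoids determinants entirely and invokes Hall's marriage theorem on the bipartite graph that joins each index $j\in\{1,\ldots,k\}$ to the coordinates in the support of $\Vec{v}_j$. There the main obstacle shifts to verifying Hall's condition: if some set $S\subseteq\{1,\ldots,k\}$ had a neighbourhood—the union of the supports of the $\Vec{v}_j$ with $j\in S$—of size smaller than $|S|$, then the $|S|$ vectors $\{\Vec{v}_j : j\in S\}$ would all lie in a coordinate subspace of dimension less than $|S|$ and would hence be linearly dependent, contradicting the hypothesis. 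Once Hall's condition is established, a saturating matching supplies exactly the distinct representatives $t_1,\ldots,t_k$.
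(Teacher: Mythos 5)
Your proposal is correct, and in fact it contains two arguments. Your primary (Leibniz) route rests on exactly the same pillars as the paper's proof: both first pass to a nonsingular $k\times k$ submatrix (the paper deletes $m-k$ rows from the matrix $V$ whose columns are $\Vec{v}_1,\ldots,\Vec{v}_k$, using the rank-$k$ hypothesis just as you do, so your ``delicate point'' is handled identically there), and both then extract from the nonvanishing determinant a permutation along which all entries are nonzero. The difference is purely in how that permutation is extracted: the paper expands the determinant along the last column, picks an index $i$ with $\Vec{v}_k[i]\neq 0$ \emph{and} nonzero minor $d_{i,k}$, sets $t_k=i$, and recurses on the minor $V_{t_k,k}$ by induction; your Leibniz expansion $\det D=\sum_\sigma \mathrm{sgn}(\sigma)\prod_j \Vec{v}_j[c_{\sigma(j)}]$ gets the whole permutation in one stroke, which is arguably cleaner since it avoids setting up the induction and the bookkeeping of deleted rows and columns (the paper's inductive step is stated somewhat informally). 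Your second route, via Hall's marriage theorem on the bipartite support graph, is genuinely different: it is determinant-free, and your verification of Hall's condition is sound --- if the union of the supports of $\{\Vec{v}_j : j\in S\}$ had size smaller than $|S|$, those vectors would lie in a coordinate subspace of dimension less than $|S|$ and hence be dependent. What this buys is a conceptual reading of the lemma as the existence of a transversal of the supports, and it generalizes readily (vectors over an arbitrary field, or indeed independent elements of a matroid with the same support structure), at the price of invoking a nontrivial combinatorial theorem where a one-line determinant identity suffices.
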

\begin{proof}
  Let $V$ be the $m\times k$ matrix which has $\Vec{v}_1,\ldots,\Vec{v}_k$ as columns. Since the given vectors
  are linearly independent, $k\le m$. 
  First, we suppose $k=m$. Then the determinant $d(V)$ of $V$ is defined and it is nonnull. 
  
  If $k=1$ then the result is trivial.
  Otherwise, we can compute $d(V)$ along the last column as
  \[
  d(V)=\sum_{i=1}^k(-1)^{i+k}\Vec{v}_k[i]d_{i,k}\,,
  \] 
  where $d_{i,k}$ is the determinant of the matrix $V_{i,k}$ obtained by
  removing from $V$ the row $i$ and the column $k$. Since $d(V)\neq 0$, there is at least one index $i$ such that
  $\Vec{v}_k[i]$ and $d_{i,k}\neq 0$. Hence, as~$t_k$ we take such~$i$.
  Using an induction on the matrix $V_{t_k,k}$, we can finally obtain the sequence $t_1,\ldots,t_k$ 
  satisfying the statement of the theorem.
  
  Finally, we observe that in the case $k<m$, by suitably deleting $m-k$ rows from $V$, we obtain
  a~$k\times k$ matrix $V'$ with $d(V')\neq 0$. Thus, we can apply the same argument to $V'$.
\end{proof}

A vector $\Vec{v} \in \mathbb{Z}^m$ is {\it unary} if it contains at most one nonzero component, i.e., $\Vec{v}[i], \Vec{v}[j] \neq 0$ for some $1 \le i, j \le m$ implies $i = j$; otherwise, it is {\it nonunary}. 
By definition, the null vector is unary.

In the sequel, we reserve $\preceq$ for the componentwise partial order on $\mathbb{N}^m$, i.e., $\Vec{u} \preceq \Vec{v}$ if and only if $\Vec{u}[k] \le \Vec{v}[k]$ for all $1 \le k \le m$. 
For a vector $\Vec{v} \in \mathbb{N}^m$, let 
\[
\Pred(\Vec{v}) = \{\Vec{u} \mid \Vec{u} \preceq \Vec{v}\}\,.
\] 
For $\Vec{u}, \Vec{v} \in \mathbb{N}^m$, $\Vec{v} - \Vec{u}$ is defined to be a vector $\Vec{w}$ with $\Vec{w}[k] = \Vec{v}[k] - \Vec{u}[k]$ for all $1 \le k \le m$. 
Note that $\Vec{v} - \Vec{u}$ is a vector in $\mathbb{N}^m$ if and only if $\Vec{u} \preceq \Vec{v}$. 

\medskip

A \emph{linear set} in $\mathbb{N}^m$ is a set of the form
\begin{equation}
\left\{\Vec{v}_0 + n_1\Vec{v}_1+n_2 \Vec{v}_2 +\cdots+n_k\Vec{v}_k \mid n_1, n_2,\ldots, n_k \in \mathbb{N}\right\}\,,\label{eq:linear}
\end{equation}
where $k\geq 0$ and $\Vec{v}_0, \Vec{v}_1, \Vec{v}_2, \ldots, \Vec{v}_k \in \mathbb{N}^m$. 
The vector $\Vec{v}_0$ is called \emph{offset}, while the vectors $\Vec{v}_1, \ldots, \Vec{v}_k$
are called \emph{generators}.
A \emph{semilinear set} in $\mathbb{N}^m$ is a finite union of linear sets in $\mathbb{N}^m$.

\medskip

The {\it Parikh map} $\psi: \Sigma^* \to \mathbb{N}^m$ associates with a word $w \in \Sigma^*$ the vector
\[
\psi(w)=\left(|w|_{a_1}, |w|_{a_2}, \ldots, |w|_{a_m}\right)\,, 
\]
which counts the occurrences of each letter of $\Sigma$ in~$w$. The vector~$\psi(w)$ is also called \emph{Parikh image} of $w$.
Notice that a word $w\in\Sigma^*$ is {\it unary} if and only if its Parikh image $\psi(w)$ is a unary vector.
One can naturally generalize this map for a language $L \subseteq \Sigma^*$ as
\[
\psi(L) = \{\psi(w) \mid w \in L\}\,.
\]
The set~$\psi(L)$ is called the \emph{Parikh image} of $L$. 
Two languages $L, L' \subseteq \Sigma^*$ are said to be {\it Parikh equivalent} to each other if and only if $\psi(L) = \psi(L')$.

\medskip

Parikh equivalence can be defined not only between languages but among languages, grammars, and finite automata by referring,
in the last two cases, to the defined languages.
For example, given a language $L$, a \cfg\ $G$, and a finite automata $A$, we say that:
\begin{itemize}
\item $G$ is Parikh equivalent to $L$ if and only if $\psi(L(G)) = \psi(L)$,
\item $A$ is Parikh equivalent to $L$ if and only if $\psi(L(A)) = \psi(L)$,
\item $G$ is Parikh equivalent to $A$ if and only if $\psi(L(G)) = \psi(L(A))$.
\end{itemize}
Parikh's Theorem, proven in 1966~\cite{Parikh1966}, states that \emph{the Parikh image of any context-free language is a semilinear set.} 
Since the class of regular languages is closed under union and each linear set as in~(\ref{eq:linear}) is the Parikh image of 
the regular language
\[
\{w_0\}\cdot\{w_1,w_2,\ldots,w_k\}^*\,,
\]
where, for $i=0,\ldots,k$,
$
w_i = a_1^{\Vec{v}_i[1]}a_2^{\Vec{v}_i[2]}\cdots a_m^{\Vec{v}_i[m]},
$
Parikh's Theorem is frequently formulated by giving the following its immediate consequence:

\begin{theorem}[\cite{Parikh1966}]\label{thm:Parikh}
	Every context-free language is Parikh equivalent to a regular language. 
\end{theorem}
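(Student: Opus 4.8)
The plan is to derive the stated theorem directly from the semilinear-set formulation of Parikh's Theorem recalled in the paragraph preceding the statement, which I take as given. That formulation asserts that for any context-free language $L$ the Parikh image $\psi(L)$ is a semilinear set, i.e.\ a finite union of linear sets. Since Parikh equivalence of $L$ with a language $R$ means exactly $\psi(R)=\psi(L)$, it suffices to exhibit a regular language $R$ whose Parikh image equals this finite union of linear sets.

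I would proceed as follows. First, write $\psi(L)=\bigcup_{j=1}^{s} S_j$, where each $S_j$ is a linear set of the form~(\ref{eq:linear}) with offset $\Vec{v}_0^{(j)}$ and generators $\Vec{v}_1^{(j)},\ldots,\Vec{v}_{k_j}^{(j)}$. Second, for a single linear set $S_j$ define, for each relevant vector $\Vec{v}$, the word $w_{\Vec{v}}=a_1^{\Vec{v}[1]}a_2^{\Vec{v}[2]}\cdots a_m^{\Vec{v}[m]}$, so that $\psi(w_{\Vec{v}})=\Vec{v}$, and set $R_j=\{w_{\Vec{v}_0^{(j)}}\}\cdot\{w_{\Vec{v}_1^{(j)}},\ldots,w_{\Vec{v}_{k_j}^{(j)}}\}^{*}$, which is regular. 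The key verification is that $\psi(R_j)=S_j$: because $\psi$ is a monoid homomorphism from $(\Sigma^{*},\cdot)$ into the commutative monoid $(\mathbb{N}^m,+)$, any word in $R_j$ is sent to $\Vec{v}_0^{(j)}+n_1\Vec{v}_1^{(j)}+\cdots+n_{k_j}\Vec{v}_{k_j}^{(j)}$, where $n_i$ records how many times the factor $w_{\Vec{v}_i^{(j)}}$ occurs; conversely every such combination is realized by some word, so the image is precisely $S_j$. Commutativity of $\mathbb{N}^m$ is what lets us ignore the order in which the generating factors appear. Third, put $R=\bigcup_{j=1}^{s}R_j$; by closure of the regular languages under finite union $R$ is regular, and $\psi(R)=\bigcup_j\psi(R_j)=\bigcup_j S_j=\psi(L)$, so $L$ and $R$ are Parikh equivalent.

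I do not expect any genuine obstacle here, since the hard combinatorial content --- the semilinearity of $\psi(L)$ itself, typically obtained through a pumping or derivation-tree analysis of a context-free grammar for $L$ --- is exactly the ingredient supplied by the assumed formulation. The only point requiring a line of care is the identity $\psi(R_j)=S_j$, and this is immediate once one observes that $\psi$ is a homomorphism into a commutative monoid and that $w_{\Vec{v}}$ was chosen so that $\psi(w_{\Vec{v}})=\Vec{v}$. In this sense the statement is essentially a reformulation of the premise, and the proof is the routine translation already outlined in the surrounding text.
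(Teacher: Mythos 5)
Your proposal is correct and follows essentially the same route as the paper, which likewise derives the statement from the semilinear-image formulation by taking, for each linear set, the regular language $\{w_0\}\cdot\{w_1,\ldots,w_k\}^*$ with $w_i=a_1^{\Vec{v}_i[1]}a_2^{\Vec{v}_i[2]}\cdots a_m^{\Vec{v}_i[m]}$ and closing under finite union. Your explicit verification that $\psi(R_j)=S_j$ via the homomorphism into the commutative monoid $(\mathbb{N}^m,+)$ just spells out what the paper treats as immediate.
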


Apparently, two \emph{unary languages} are Parikh equivalent if and only if they are equal.
Hence, as a consequence of Theorem~\ref{thm:Parikh}, each unary context-free language is regular.
This result, which was firstly discovered in~1962 by Ginsburg and Rice~\cite{GinsburgRice1962},
earlier than Parikh's Theorem, has been studied from the descriptional
complexity point of view in 2002 by Pighizzini, Shallit and Wang~\cite{PighizziniShallitWang2002}, 
proving the following:

\begin{theorem}[{\cite[Thms.~4, 6]{PighizziniShallitWang2002}}]\label{thm:PSW2002}
	For any \cnfg\ with $h$ variables that generates a unary language, there exist an 
	equivalent \ownfa\ with at most $2^{2h-1}+1$ states and an equivalent \owdfa\  
	with less than $2^{h^2}$ states. 
\end{theorem}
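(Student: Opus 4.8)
The plan is to work entirely with the length set $N=\{\,n \mid a^n\in L(G)\,\}$, since over a one-letter alphabet a language is completely determined by the set of lengths it contains, and an \ownfa\ (resp.\ \owdfa) for $L(G)$ is just a unary automaton recognizing $N$. By Theorem~\ref{thm:Parikh}, $N$ is semilinear in $\mathbb{N}$, hence ultimately periodic, so it decomposes into a finite ``initial'' part together with finitely many arithmetic progressions. The whole argument then reduces to bounding, in terms of $h$, two quantities: the \emph{threshold} (preperiod) below which $N$ behaves irregularly, and the \emph{periods} of the progressions. The empty word is handled separately, according to whether $S\derives\varepsilon$; this accounts for the extra ``$+1$'' state.

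The first step is to bound the shortest word generated by each variable. For a variable $A$ from which some word is derivable, take a derivation tree of minimal size: no variable can repeat along any root-to-leaf path, since contracting the path between two equal occurrences would yield a strictly smaller tree. Hence the tree has height at most $h$, and being binary (the grammar is a \cnfg) it has at most $2^{h-1}$ leaves; as every leaf contributes one terminal, the minimal length $m_A$ with $A\derives a^{m_A}$ satisfies $m_A\le 2^{h-1}$. The second step bounds the periods. Any length exceeding $2^{h-1}$ forces, by the same pigeonhole argument along a long path, some variable $A$ to repeat, giving a pumping subderivation that in the unary case reads length-wise as $A\derives a^{p}A$; choosing it minimally gives $p\le 2^{h-1}$. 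Collecting the increments each variable can pump, the \emph{effective} period contributed by $A$ is the greatest common divisor of its pump lengths, still at most $2^{h-1}$, and there are at most $h$ variables. Every element of $N$ above the threshold is thus reached from one of the bounded base lengths by adding nonnegative combinations of these pump lengths, so $N$ is a finite union of arithmetic progressions whose bases and periods are all at most $2^{h-1}$.

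For the \ownfa\ I would realize $N$ in Chrobak normal form: an initial deterministic tail followed by disjoint cycles, one per pumping variable and of length that variable's effective period. The number of cycle states is then at most $h\cdot 2^{h-1}$, while the tail length is governed by how late the progressions start to fill in, controlled by a conductor/Frobenius-type estimate showing the set stabilizes by roughly $2^{2h-2}$; summing tail plus cycles plus the lone $\varepsilon$-state gives $2^{2h-1}+1$. For the \owdfa\ I would determinize this unary automaton into the standard lasso, namely a tail followed by a \emph{single} cycle whose length is the least common multiple of the at most $h$ cycle lengths. Since each is at most $2^{h-1}$, this least common multiple is at most $(2^{h-1})^h=2^{h(h-1)}$, and adding the comparatively small tail keeps the total below $2^{h^2}$. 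The contrast between the two shapes — the \ownfa\ keeps the periods in separate cycles (a \emph{sum}, dominated by the threshold $\sim 2^{2h-1}$) while the \owdfa\ must merge them (an \emph{lcm} of $h$ factors $\sim 2^{h^2}$) — is exactly what explains the two different bounds.

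The main obstacle is the threshold estimate: bounding the individual base and pump lengths by $2^{h-1}$ is a clean tree-height argument, but proving that their nonnegative combinations already realize every residue class they will ever realize by the time $n$ reaches $2^{2h-1}$ — so that the preperiod, rather than some uncontrolled Frobenius number, is the binding constraint — is the delicate point, and it is precisely where the exact constants $2^{2h-1}+1$ and $2^{h^2}$ are earned. Once the additive description of $N$ with these quantitative bounds is in hand, the passage to the two automaton shapes is routine.
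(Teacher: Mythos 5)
First, a point of calibration: the paper contains no proof of this statement --- it is imported verbatim from the cited reference (Thms.~4 and~6 of Pighizzini--Shallit--Wang), so the only meaningful comparison is with that external proof, whose overall strategy your sketch does resemble: reduce to the length set $N=\{n\mid a^n\in L(G)\}$, bound minimal yields by $2^{h-1}$ via repetition-free derivation trees, assemble a Chrobak-shaped \ownfa\ (tail plus per-variable cycles), and determinize into a single lasso whose cycle is an lcm. As a proof, however, your text has gaps at precisely the two quantitative steps that carry the stated constants, and one of them is not merely unfinished but argued incorrectly as written.

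The flawed step is the pump bound ``choosing it minimally gives $p\le 2^{h-1}$.'' The pigeonhole/contraction argument that works for shortest words does not transfer to pumping trees: in a minimal tree witnessing $A\Derives a^pA$, you may only contract a repeated pair of occurrences of a variable $B$ when the contraction preserves the distinguished lower occurrence of $A$. If the upper occurrence of $B$ lies on the spine (the path from the root $A$ to the leaf $A$) and the lower occurrence sits in a side subtree hanging off a deeper spine node, then replacing the subtree at the upper $B$ by the subtree at the lower $B$ deletes the lower $A$, and the reverse replacement is not size-decreasing; so minimality does not exclude such repeats. What the naive argument yields is a spine of at most $h$ variables, each contributing a side subtree of minimal yield at most $2^{h-1}$, i.e.\ $p\le h\cdot 2^{h-1}$, not $p\le 2^{h-1}$. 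This matters: your \owdfa\ bound is (lcm of at most $h$ periods, each at most $2^{h-1}$) plus a tail, and inflating each period to $h\cdot 2^{h-1}$ destroys the $2^{h^2}$ target, so recovering the per-variable period bound requires genuine additional tree surgery --- this is where the cited proof does real work. The second gap you flag yourself: the conductor/threshold estimate (that every residue ever realized is realized by length roughly $2^{2h-2}$, so the tail of length about $2^{2h-1}$ suffices) is exactly what earns the constant $2^{2h-1}+1$, and you leave it open. It also hides a further trap in your reduction ``every element of $N$ above the threshold is a bounded base plus nonnegative combinations of pump lengths'': iterated down-pumping removes pumps of possibly different variables, and a later contraction can delete from the tree the occurrence of a variable pumped earlier, so the final base derivation need not contain the variable whose pumps you want to reapply upward; pairing each base length with the set of variables witnessed in its derivation has to be maintained through the whole contraction process. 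In short: correct skeleton, but the two lemmas on which the exact bounds $2^{2h-1}+1$ and $2^{h^2}$ rest are respectively mis-argued and missing.
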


In the paper we will also make use of the transformation of unary \ownfas\ into~\owdfas, whose
cost was obtained in 1986 by Chrobak~\cite{Chrobak1986}:

\begin{theorem}[\cite{Chrobak1986}]\label{thm:Chrobak86}
	The state cost of the conversion of $n$-state unary \ownfas\ into equivalent \owdfas\ is $e^{\Theta(\sqrt{n \cdot \ln n})}$.
\end{theorem}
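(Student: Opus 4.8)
The plan is to establish the upper and the lower bound separately, the common engine being the arithmetic behaviour of Landau's function $g(n)=\max\{\mathrm{lcm}(n_1,\ldots,n_k)\mid n_1+\cdots+n_k=n\}$, for which it is classically known (via the prime number theorem) that $\ln g(n)\sim\sqrt{n\cdot\ln n}$, i.e., $g(n)=e^{\Theta(\sqrt{n\cdot\ln n})}$. First I would bring the given $n$-state unary \ownfa\ into \emph{Chrobak normal form}: an equivalent \ownfa\ consisting of a deterministic initial path (the \emph{tail}) of $O(n^2)$ states, at whose last state the only nondeterministic choice selects one of several disjoint deterministic cycles, with the cycle lengths summing to at most $n$. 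Proving that this normal form is always reachable---by analysing the ultimately periodic structure of the length set accepted by a unary automaton together with pumping arguments on the states---is the structural heart of the argument.

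For the upper bound I would then determinize the normal-form automaton. The tail is already deterministic and contributes $O(n^2)$ states. Once the computation has entered the cyclic region, the subset construction must record the current position in \emph{every} cycle simultaneously; since these cycles have lengths $\ell_1,\ldots,\ell_r$ with $\sum_i \ell_i\le n$, the period of the resulting \owdfa\ equals $\mathrm{lcm}(\ell_1,\ldots,\ell_r)\le g(n)$. Hence the total number of states is $O(n^2)+g(n)=e^{O(\sqrt{n\cdot\ln n})}$, the Landau term dominating the polynomial tail.

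For the matching lower bound I would exhibit a family of hard instances. Choosing distinct primes $p_1<p_2<\cdots<p_r$ with $\sum_i p_i\le n$ and $\prod_i p_i$ as large as possible, the prime number theorem yields $\prod_i p_i=e^{\Omega(\sqrt{n\cdot\ln n})}$. I would build an \ownfa\ with $r$ disjoint cycles of lengths $p_1,\ldots,p_r$ and a start performing the nondeterministic choice, using at most $n$ states, accepting a unary language whose set of accepted lengths is ultimately periodic with minimal period $\mathrm{lcm}(p_1,\ldots,p_r)=\prod_i p_i$. Since a \owdfa\ recognizing an ultimately periodic unary language of minimal period $T$ must carry at least $T$ states on its cycle, every equivalent \owdfa\ needs $e^{\Omega(\sqrt{n\cdot\ln n})}$ states.

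The main obstacle I expect is twofold. On the combinatorial side, establishing Chrobak normal form with a tail that is only \emph{polynomially} long is delicate, and it is precisely this polynomial bound that prevents the tail from contaminating the exponential estimate. On the number-theoretic side, pinning down the asymptotics of Landau's function and the prime selection in the lower bound is what makes the constant in the exponent match from both directions, so that the bound is genuinely $\Theta(\sqrt{n\cdot\ln n})$ rather than merely $O$ and $\Omega$ with a gap.
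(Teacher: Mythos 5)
Your proposal is correct and takes essentially the same route as the source on which the paper relies: the paper states this theorem as an imported result of Chrobak~\cite{Chrobak1986} without reproving it, and Chrobak's argument is exactly your combination of the normal form (a deterministic tail of polynomially many states feeding disjoint deterministic cycles of total length at most $n$, the same form the paper later sketches in its proof of Theorem~\ref{thm:u1NFAto2DFA}) with Landau's function $g(n)=e^{\Theta(\sqrt{n\cdot\ln n})}$ for the upper bound and a prime-cycle witness for the lower bound. The only detail you leave implicit---fixing the accepting residues in the lower-bound cycles (e.g., accepting $a^{\ell}$ with $\ell\not\equiv 0 \pmod{p_i}$ in the cycle of length $p_i$, as in the paper's Example~\ref{example}) so that the minimal period of the accepted language is genuinely $\prod_i p_i$ rather than something smaller---is routine to supply.
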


	\subsection{Preliminary Constructions}
	\label{sec:constructions}

Here, we present some preliminary constructions which will
be used in the rest of the paper. These constructions are simple and standard.
They are given just for the sake of completeness. Hence, the trained reader
can skip this part, going directly to the following sections.

First, we consider some \emph{decomposition results}: we show how to obtain automata and grammars 
for the unary and nonunary parts of the languages defined by given automata and grammars, respectively.
After, we will shortly  discuss some \emph{composition results}: 
how to obtain \owdfas\ or \twdfas, respectively,
accepting the union of languages defined by  given \owdfas\ or \twdfas.

Throughout the section, let us fix an $m$-symbol alphabet  $\Sigma=\{a_1, a_2,\linebreak \ldots,a_m\}$.
Let us starts by considering finite automata.

\begin{lemma}\label{lemma:decREG}
	For each $n$-state \ownfa~$A$ accepting a language $L(A)\subseteq\Sigma^*$,
	there exist $m+1$ \ownfas\ $A_0,A_1,\ldots,A_m$ such that:
	\begin{itemize}
	\item $A_0$ has $n(m+1)+1$ states and accepts the nonunary part of~$L(A)$.
	\item For $i=1,\ldots,m$, $A_i$ is a unary \ownfa\ with $n$ states which
	accepts the unary part $L(A)\cap\{a_i\}^*$.
	\end{itemize}
	Furthermore, if~$A$ is deterministic then $A_0,A_1,\ldots,A_m$ are deterministic, too.
\end{lemma}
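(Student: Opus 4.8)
The plan is to treat the two families of automata separately, handling the easier unary parts first and then the nonunary part, which carries the real content. For each $i \in \{1,\ldots,m\}$, I would obtain $A_i$ from $A$ by simply discarding every transition whose label is not $a_i$, while keeping the same state set, the same initial state, the same set of final states, and regarding $\{a_i\}$ as the input alphabet. Then a word $a_i^k$ is accepted by $A_i$ exactly when $A$ has a computation on $a_i^k$ from its initial state to a final state, i.e.\ exactly when $a_i^k \in L(A)$; hence $L(A_i) = L(A)\cap\{a_i\}^*$. This adds no states, so $A_i$ has $n$ states, and since it only restricts (never duplicates) transitions, it is deterministic whenever $A$ is.

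The nonunary automaton $A_0$ is the part that needs care, and pushing the state bound down to $n(m+1)+1$ rather than the naive product bound is where I expect the main difficulty. The idea is to run $A$ while simultaneously recording, in a small amount of extra control, whether the input read so far is empty, consists of a single repeated letter, or already contains two distinct letters. Concretely I would use a single fresh initial state $q_{\mathrm{init}}$ (``nothing read yet''); for each letter index $i$ and each state $q$ of $A$ a state $(q,i)$ (``$A$ is in $q$ and only $a_i$'s have been read''), giving $nm$ states; and for each state $q$ of $A$ a state $(q,0)$ (``$A$ is in $q$ and at least two distinct letters have been read''), giving $n$ states. This totals $1 + nm + n = n(m+1)+1$ states. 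The transitions would mirror those of $A$ on the underlying first component, while the mode tag is updated as follows: from $q_{\mathrm{init}}$ on $a_i$, enter mode $i$ at the states $A$ reaches from its initial state on $a_i$; in mode $i$, stay in mode $i$ on $a_i$ but switch to mode $0$ on any $a_j$ with $j\neq i$; and in mode $0$, stay in mode $0$ on every letter. Only the states $(q,0)$ with $q$ final in $A$ would be declared final.

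It then remains to verify correctness, which I would organize around the mode component. For a nonunary input $w$, the computation of $A_0$ necessarily ends in mode $0$: the first letter moves it out of $q_{\mathrm{init}}$ into some mode $i$, and the first occurrence of a second distinct letter moves it into mode $0$, where it remains; moreover the first components reachable in mode $0$ are exactly the states $A$ can reach on $w$, so $w \in L(A_0)$ iff $w \in L(A)$. Conversely, any unary input (including $\varepsilon$) keeps $A_0$ among the non-final states $q_{\mathrm{init}}$ and $(q,i)$, hence is rejected, which is precisely the filtering of unary words, $\varepsilon$ included, that we want. Finally, determinism is preserved, since when $A$ is deterministic each of the transition rules above yields at most one successor, making $A_0$ deterministic as well; this completes the plan.
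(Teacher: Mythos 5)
Your proposal is correct and matches the paper's own construction essentially verbatim: the unary automata $A_i$ are obtained by deleting all transitions on letters other than $a_i$, and $A_0$ is built from a fresh initial state, $m$ tagged copies $(q,i)$ of the state set tracking the unique letter seen so far, and one untagged copy (your mode~$0$, the paper's copy of $A$ itself) entered upon reading a second distinct letter, with final states only in that copy, giving exactly $1+nm+n=n(m+1)+1$ states. Your correctness and determinism-preservation arguments coincide with the paper's, so there is nothing to add.
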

\begin{proof}
  To accept an input $w$, the automaton~$A_0$ has to check that~$w$ is accepted by~$A$ and contains
	at least two different symbols. To do that, $A_0$ uses the same states and transitions as $A$.
	However, in a preliminary phase, it keeps track in its finite control of the first letter of~$w$,
	until discovering a different letter.
	
	The automaton~$A_0$, besides all states and transitions of~$A$,
	has a new initial state and $m$ extra copies $[q,1],\ldots,[q,m]$ of each state $q$ of $A$.
	The transitions from the initial state of~$A_0$ simulate those from the initial state of~$A$, also remembering the
	first symbol of the input, i.e., a transition in~$A$ which from the initial state, reading a symbol~$a_i$,
	leads to the state $q$, is simulated in~$A_0$ by a transition leading to $[q,i]$.
  
  From a state $[q,i]$, reading the same symbol~$a_i$ the automaton~$A_0$ can move to each state~$[p,i]$
  such that~$A$ from~$q$ reading~$a_i$ can move to~$p$. In this way, until the scanned input prefix consists
  only of occurrences of the same letter~$a_i$, $A_0$ simulates~$A$ using the $i$th copies of the states.
  However, when in a state $[q,i]$ a symbol $a_j\neq a_i$ is read, having verified that the input contains at 
  least two different letters, $A_0$ can move to each state~$p$ which is reachable in~$A$ from the state~$q$,
  so entering the part of~$A_0$ corresponding to the original~$A$.
  The final states of~$A_0$ are the final states in the copy of $A$.
  From this construction we see that the number of states of~$A_0$ is~$n(m+1)+1$.
  Furthermore, if~$A$ is deterministic then also~$A_0$ is deterministic.

\smallskip

	For the unary parts, it is easily seen that for $i=1,\ldots,m$, the automaton $A_i$ can be obtained 
	by removing from $A$ all the transitions on the symbols $a_j\neq a_i$. Clearly, also this construction preserves
	determinism.
\end{proof}

We can give a similar result in the case of \cfgs.

\begin{lemma}\label{lemma:decCFL}
	For each $h$-variable \cnfg~$G$ generating a language $L(G)\subseteq\Sigma^*$,
	there exist $m+1$ \cnfgs\ $G_0,G_1,\ldots,G_m$ such that:
	\begin{itemize}
	\item $G_0$ has $mh{-}m{+}1$ variables and generates the nonunary part~$L_0$ of~$L(G)$.
	\item For $i=1,\ldots,m$, $G_i$ is a unary \cnfg\ with~$h$ variables which
	generates the unary part $L_i=L(G)\cap\{a_i\}^*$.
	\end{itemize}
\end{lemma}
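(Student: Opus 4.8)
The plan is to treat the two items separately, obtaining the unary grammars $G_1,\ldots,G_m$ by a trivial restriction and the nonunary grammar $G_0$ by a relabelling that mirrors the construction of $A_0$ in Lemma~\ref{lemma:decREG}. For the unary parts the construction is immediate: to build $G_i$ I keep all the variables of $G$ and all productions of the forms $B\to CD$ and $S\to\varepsilon$, but among the terminal productions I retain only those of the form $B\to a_i$, discarding every $B\to a_j$ with $j\neq i$. The result is a unary \cnfg\ with the same $h$ variables. The inclusion $L(G_i)\subseteq L(G)\cap\{a_i\}^*$ is clear, since every derivation of $G_i$ is also a derivation of $G$ and produces only the letter $a_i$; for the converse, any $G$-derivation of a word $w\in\{a_i\}^*$ can use, among the terminal productions, only those producing $a_i$, so all of its productions survive in $G_i$ and hence $w\in L(G_i)$.

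For the nonunary grammar, write $L_B=\{w\mid B\derivesG w\}$ for each variable $B$, let $N_B$ be the set of \emph{nonunary} words in $L_B$, and let $U_{B,i}=L_B\cap\{a_i\}^+$. Since $C,D\in V\setminus\{S\}$ never derive $\varepsilon$, we have $L_B=N_B\cup\bigcup_{i=1}^m U_{B,i}$ for $B\neq S$, and the target language is $L_0=N_S$. As variables of $G_0$ I take a fresh start symbol $S_0$ together with one copy $[B,i]$ for each non-start variable $B$ and each $i\in\{1,\ldots,m\}$, which gives $1+m(h-1)=mh-m+1$ variables. The intended meaning is that $[B,i]$ generates exactly $N_B\cup U_{B,i}$, i.e.\ the nonunary words derivable from $B$ together with the unary ones over $a_i$. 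The productions are: $[B,i]\to a_i$ for every rule $B\to a_i$ of $G$; for every rule $B\to CD$ of $G$ with $B\neq S$ and every $i$, the rules $[B,i]\to[C,p][D,q]$ for all pairs $(p,q)$ \emph{except} the ``wrong unary'' pairs $(k,k)$ with $k\neq i$; and for every rule $S\to CD$, the rules $S_0\to[C,p][D,q]$ for all $p\neq q$. One checks at once that $G_0$ is again in Chomsky normal form, as there are no $\varepsilon$-productions ($\varepsilon\notin L_0$) and $S_0$ never occurs on a right-hand side.

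To prove correctness I would show, by induction on derivation length, that $[B,i]\derivesZ w$ holds if and only if $w\in N_B\cup U_{B,i}$; feeding this into the productions of $S_0$ then yields $L(G_0)=N_S=L_0$. Soundness is routine: erasing all indices turns a $G_0$-derivation into a $G$-derivation, so only words of $L_B$ are produced, and the exclusion of the pairs $(k,k)$ with $k\neq i$ guarantees that a word generated from $[B,i]$ is never unary over a letter different from $a_i$. The delicate direction is completeness: given $w\in N_B\cup U_{B,i}$ with a $G$-derivation tree, I must relabel each node with an index so that every original production becomes a legal $G_0$-production. A node whose yield is unary over $a_k$ is \emph{forced} to carry the index $k$, whereas a node with nonunary yield may carry any index, and these choices must be reconciled with the requirement that at each binary node the two children either carry distinct indices or both agree with the parent's index. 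The main obstacle is exactly to establish that such a consistent global labelling always exists; the observation that unlocks it is that a binary node has unary yield precisely when both of its children are unary over the \emph{same} letter, so whenever the parent's yield is nonunary one can always force the two children to receive distinct indices (this uses $m\geq2$; the case $m=1$ is vacuous, since then $L_0=\emptyset$), which makes the corresponding binary production legal.
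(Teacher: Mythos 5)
Your proof is correct, and it follows the paper's overall skeleton---the unary grammars obtained by deleting the terminal productions $B\to a_j$ with $j\neq i$, and the nonunary grammar built from a fresh start symbol plus $m$ indexed copies of the non-start variables, giving the same count $mh-m+1$---but the invariant you attach to the copies is genuinely different. The paper's copy $B_i$ satisfies: $B_i \derivesZ w$ if and only if $B \derivesG w$ and $w$ contains at least one occurrence of $a_i$; accordingly its binary productions force the parent's index into one of the two children ($B_i \to C_iD_j$ and $B_i \to C_jD_i$ only), threading a witness occurrence of $a_i$ down the tree, and the completeness induction is purely local: one simply picks a letter occurring in the sibling's yield. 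Your $[B,i]$ instead generates the nonunary words of $L_B$ together with the unary words over $a_i$; this ``forbid the wrong unary letters'' formulation admits a larger binary production set (all pairs $(p,q)$ except $(k,k)$ with $k\neq i$---strictly larger than the paper's when $m\geq 3$, identical when $m=2$) and shifts the burden to the completeness direction, which you close correctly with the forced-label argument: unary-yield nodes must carry the index of their letter, and since a binary node has unary yield exactly when both children are unary over the same letter, every nonunary-yield node can give its children distinct indices (using $m\geq 2$; the case $m=1$ is indeed vacuous), which is legal under any parent index, while unary-yield nodes labelled $i$ pass $(i,i)$, which is never excluded. Both invariants yield the same top-level conclusion because the start symbol demands distinct indices on its two children, forcing nonunarity, and since the lemma's complexity measure counts only variables, your extra productions cost nothing. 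In short: the paper's invariant buys a shorter, more local induction; yours is a sound complementary formulation whose global relabelling step, as you note, is the delicate point---and it could equally be cast as a direct induction on derivation length by choosing the children's indices case by case, exactly as your case analysis does.
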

\begin{proof}
For $i=1,\ldots,m$, the design of $G_i$ is simply done by deleting {}from $P$ all productions of the form $B \to a_j$ with $i \neq j$. 
Built in this manner, it is impossible for $G_i$ to contain more than $h$ variables. 

Giving such a linear upper bound on the number of variables for $G_0$ is slightly more involved. 
It is clear that any production of one letter or $\varepsilon$ directly {}from $S$ is contrary to the purpose of $G_0$. 
This observation enables us to focus on the derivations by $G$ that begins with replacing $S$ by two variables. 
Consider a derivation 
\[
S \derG BC \DerivesG u C \DerivesG uv
\] 
for some non-empty words  $u, v \in \Sigma^+$ and $S \to BC \in P$. 
$G_0$ simulates $G$, but also requires extra feature to test whether $u$ and $v$ contain respectively letters $a_i$ and $a_j$,
for some $i \neq j$, and make only derivations that pass this test valid.  
To this end, we let the start variable $S'$ of $G_0$ make guess which of the two distinct letters in 
$\Sigma$ have to derive {}from $B$ and $C$, respectively. 
We encode this guess into the variables in $V \setminus \{S\}$ as a subscript like $B_i$ (this means that, 
for $w \in \Sigma^*$, $B_i \DerivesZ w$ if and only if $B \DerivesG w$ and $w$ contains at least one $a_i$).

Now, we give a formal definition of $G_0$ as a quadruple $(V', \Sigma, P', S')$, where 
\[V' = \{S'\} \cup \{B_i \mid B \in V \setminus \{S\}, 1 \le i \le m\}\] 
and $P'$ consists of the following production rules: 
\begin{enumerate}
\item	$\{S' \to B_i C_j \mid \mbox{$S \to BC \in P$ and $1 \le i, j \le m$ with $i \neq j$}\}$; 
\item	$\{B_i \to C_i D_j, B_i \to C_j D_i \mid \mbox{$B \to CD \in\!P$ with $B \neq S$ and $1\!\le\!i, j\!\le\!m$}\}$; 
\item	$\{B_i \to a_i \mid \mbox{$B \to a_i \in P$ and $1 \le i \le m$}\}$. 
\end{enumerate}

\noindent
We conclude the proof by checking that $L(G_0)\!=\!\{w \in L(G)\!\mid\!\mbox{$w$ is not unary}\}$. 
To this aim we prove the following:

\begin{claim}
	Let $B_i$ be a variable of $G_0$ that is different {}from the start variable. 
	For~$w \in \Sigma^*$, $B_i \DerivesZ w$ if and only if $B \DerivesG w$ and $w$ contains 
	at least one occurrence of~$a_i$. 
\end{claim}
\noindent
	Both implications will be proved using induction on the length of derivations. 

\smallskip

	\emph{(Only-if part):} 
	If $B_i \derZ w$ (single-step derivation), then $w$ must be $a_i$ and $B \to a_i \in P$ according to the type-3 production in $P'$. 
	Hence, the base case is correct. 
	The longer derivations must begin with either $B_i \to C_i D_j$ or $B_i \to C_j D_i$ for some $B \to CD \in P$ and some $1 \le j \le m$. 
	It is enough to investigate the former case (the other one is completely similar). 
	Then we have
	\[%
	B_i \derZ C_i D_j \DerivesZ w_1 D_j \DerivesZ w_1 w_2 = w%
	\]
  for some $w_1, w_2 \in \Sigma^+$. 
	By induction hypothesis, $C \DerivesG w_1$, $w_1$ contains $a_i$, and $D \DerivesG w_2$. 
	Hence, $B \derG CD \DerivesG w_1 D \DerivesG w_1 w_2 = w$ is a valid derivation by~$G$, and $w$ contains~$a_i$.

\smallskip

	\emph{(If part):} 
	The base case is proved as for the direct implication. 
	If $B \DerivesG w$ is not a single-step derivation, then it must start with applying to $B$ some production $B \to CD \in P$. 
	Namely,
	\[%
	B \derG CD \DerivesG w_1' D \DerivesG w_1' w_2' = w
	\] 
	for some non-empty words $w_1', w_2' \in \Sigma^+$. 
	Thus, either $w_1'$ or $w_2'$ contains $a_i$; let us say $w_1'$ does (the other case is similar).
	By induction hypothesis, $C_i \DerivesZ w_1'$. 
	A letter $a_j$ occurring in $w_2'$ is chosen, and the hypothesis gives $D_j \DerivesZ w_2'$. 
	As a result, the derivation 
	\[%
	B_i \derZ C_i D_j \DerivesZ w_1 D_j \DerivesZ w_1' w_2' = w
	\] 
	is valid. 

This completes the proof of the claim.

\medskip

To conclude the proof of the lemma, let us check that $G_0$ genetates the nonunary part of $L(G)$. 
For the direct implication, assume that $u \in L(G_0)$. 
Its derivation should be 
\[%
S' \derZ B_i C_j \DerivesZ u_1 C_j \DerivesZ u_1 u_2 = u
\]
for some $S' \to B_i C_j \in P'$, $u_1, u_2 \in \Sigma^+$, $1\leq i,j\leq m$, with $i\neq j$. 
Ignoring the subscripts $i, j$ in this derivation brings us with $S \DerivesG u$. 
Moreover, the claim above implies that $u_1$ and $u_2$ contain $a_i$ and $a_j$, respectively. 
Thus, $u$ is a nonunary word in $L(G)$. 

Conversely, consider a nonunary word $w \in L(G)$. 
Being nonunary, $|w| \ge 2$, and this means that its derivation by $G$ must begin with a production $S \to BC$. 
Since $B, C \neq S$, they cannot produce $\varepsilon$, and hence, we have
\[%
S \derG BC \DerivesG w_1 C \DerivesG w_1 w_2 = w
\] 
for some nonempty words $w_1, w_2 \in \Sigma^+$. 
Again, being $w$ nonunary, we can find a letter $a_i$ in $w_1$ and a letter $a_j$ in $w_2$ such that $i \neq j$. 
Now, the claim above implies $B_i \DerivesZ w_1$ and $C_j \DerivesZ w_2$. 
Since $S' \to B_i C_j \in P'$, the derivation 
\[%
S' \derZ B_i C_j \DerivesZ w_1 C_j \DerivesZ w_1 w_2 = w
\]
is a valid one by $G_0$. 

Note that, being thus designed, $G_0$ contains $mh{-}m{+}1$ variables.
\end{proof}

We conclude this section by shortly discussing some constructions related to the union of
languages defined by~\owdfas\ and by~\twdfas.

First, we remind the reader that  $k$ \owdfas~$A_1,A_2,\ldots,A_k$ with $n_1,n_2,\ldots,n_k$ states,
respectively, can be simulated ``in parallel'' by a \owdfa, in
order to recognize the union~$L(A_1)\cup L(A_2)\cup\cdots\cup L(A_k)$. In particular, the state set
of~$A$ is the cartesian product of the state sets of the given automata. For this reason,
the automaton~$A$ obtained according to this standard construction is
usually called \emph{product automaton}. Its number of states is~$n_1\cdot n_2\cdots n_k$.

\smallskip

If~$A_1,A_2,\ldots,A_k$ are \twdfas\ and we want to obtain a \twdfa~$A$
accepting the union~$L(A_1)\cup L(A_2)\cup\cdots\cup L(A_k)$,
the state cost reduces to the sum $n_1 + n_2 + \cdots + n_k$,
\emph{under the hypothesis that the automata are halting}, namely, they do not present any infinite computation.

In particular, on input~$w$, the automaton~$A$ simulates in sequence, for $i=1,\ldots,k$, the automata~$A_i$, 
halting and accepting in the case one $\hat\imath$ is found
such that $A_{\hat\imath}$ accepts $w$. 

Suppose that, for $i=1,\ldots,m$, the state set of $A_i$ is $Q_i$ with final state $q_{i,f}$
and $Q_i\cap Q_j\neq\emptyset$ for $i\neq j$.
Then, $A$ can be defined as follows.
\begin{itemize}
\item The set of states is $Q=Q_1\cup Q_2\cup\cdots\cup Q_k$.
\item The initial state is the initial state of $A_1$.
\item The final state is the final state $q_{k,f}$ of~$A_k$.
\item For $i=1,\ldots,k-1$, $A$ contains all the transitions of~$A_i$ with the exception
of those leading to the final state $q_{i,f}$ of $A_i$. Those transitions lead directly to~$q_{k,f}$, to halt and accept.

In this way, the state $q_{i,f}$ of $A_i$ becomes unreachable. (We remind the reader that this state is also
halting.) In the automaton~$A$, the state~$q_{i,f}$ is ``recycled'' in a different way: it is used to prepare the 
simulation of~$A_{i+1}$ after a not accepting
simulation of~$A_i$. To this aim, each undefined transition of~$A_i$ leads in~$A$ to the state~$q_{i,f}$,
where the automaton~$A$ loops, moving the head leftward, to reach the left endmarker. There, $A$~moves the head one
position to the right, on the first symbol of the input word, and 
enters the initial state of~$A_{i+1}$, hence starting to simulate it.
\item All the transitions of~$A_k$ are copied in~$A$ without any change. Hence, if the input was rejected in all the
simulations of $A_1,A_2,\ldots,A_{k-1}$, it is accepted by~$A$ if and only if it is accepted by~$A_k$.
\end{itemize}

We observe that each \owdfa\ can be converted into a \twdfa\ in the form we are considering (cf.\ p.~\pageref{p:twdfa}), 
just adding the accepting state, which is entered on the right endmarker when the given \owdfa\ accepts the input. 
So the above construction works (with the addition of at most $k$ extra states) 
even when some of the $A_i$'s are one-way.

Finally, we point out that, as proven in~\cite{GeffertMereghettiPighizzini2007},
with a linear increasing in the number of the states, each \twdfas\ can be made halting.
In particular, each $n$-state~\twdfa\ can be simulated by a halting~\twdfa\ with~$4n$ states.

So the above outlined construction can be extended to the case of nonhalting \twdfas\ by
obtaining a~\twdfa\ with no more than $4\cdot(n_1+n_2+\cdots+n_k)$ states.

\section{From \mbox{\sc\ownfa}s to Parikh equivalent \mbox{\sc\owdfa}s}
	\label{sec:NFA_to_DFA}

In this section we present our first main contribution.
Fixed an alphabet~$\Sigma = \{a_1, a_2, \ldots, a_m\}$,
{}from each $n$-state \ownfa~$A$ with input alphabet~$\Sigma$, we derive a Parikh equivalent \owdfa\ $A'$ 
with $e^{O(\sqrt{n \cdot \ln n})}$
states. Furthermore, we prove that this cost is tight.

Actually, as a preliminary step, we obtain a result which is interesting \emph{per se}:
if each word accepted by the given \ownfa~$A$ contains at least two different symbols, i.e., it is
nonunary, then the Parikh equivalent \owdfa~$A'$ can be obtained with polynomially many states.
Hence, the superpolynomial blowup is due to the unary part of the accepted language.
This result (presented in Theorem~\ref{thm:nonunary_NFA_to_DFA}) looks quite
surprising. Hence, before starting the technical presentation,
we show an example with the aim to give, in a very simple case, a taste of our constructions.

\begin{example}\label{example}
Let us consider the following language
\[
L=\{ba^n\mid n\bmod 210\neq 0\}\,.
\]
Clearly, $L$ does not contain any unary word. Furthemore, it can be verified that~$L$ is accepted by the 
$18$-state \ownfa~$A$ in Fig.~\ref{fig:L}\emph{(Left)}.
In particular, in the initial state, reading the letter $b$, \emph{in a nondeterministic way} $A$ chooses 
to verify the membership of the input to one of the following languages:
\begin{itemize}
\item $L_1=\{ba^n\mid n\bmod 2\neq 0\}$\,,
\item $L_2=\{ba^n\mid n\bmod 3\neq 0\}$\,,
\item $L_3=\{ba^n\mid n\bmod 5\neq 0\}$\,,
\item $L_4=\{ba^n\mid n\bmod 7\neq 0\}$\,.
\end{itemize}
Of course, $L=L_1\cup L_2\cup L_3\cup L_4$.
The automaton~$A$ can be transformed into an equivalent \owdfa, by identifying the transitions leaving the initial
state and by merging the 4 loops into a unique loop of length $2\cdot 3\cdot 5\cdot 7 = 210$.
Using standard distinguishability arguments, it can be shown that it is not possible to do better. As a matter
of fact, the smallest complete \owdfa\ accepting~$L$ requires 212 states.

\begin{figure}[tb]
\begin{center}
\unitlength 0.225cm
\linethickness{0.4pt}%
\begin{picture}(40,26)
\put(0,0){
\begin{picture}(15,26)(-1,0)
\put(0,0){%
\begin{picture}(14,6)(0,0)
  \put(1,3){\circle{2}}
  \put(5,5){\circle{2}}
  \put(9,5){\circle{2}}
  \put(13,5){\circle{2}}
  \put(13,1){\circle{2}}
  \put(9,1){\circle{2}}
  \put(5,1){\circle{2}}
  \put(5,5){\circle{1.6}}
  \put(9,5){\circle{1.6}}
  \put(13,5){\circle{1.6}}
  \put(13,1){\circle{1.6}}
  \put(9,1){\circle{1.6}}
  \put(5,1){\circle{1.6}}
  \put(1.75,3.75){\vector(2,1){2.25}}
  \put(6,5){\vector(1,0){2}}
  \put(10,5){\vector(1,0){2}}
  \put(13,4){\vector(0,-1){2}}
  \put(12,1){\vector(-1,0){2}}
  \put(8,1){\vector(-1,0){2}}
  \put(4,1.25){\vector(-2,1){2.25}}
  \put(2.5,4.7){\makebox(0,0){\scriptsize $a$}}
  \put(6.7,5.5){\makebox(0,0){\scriptsize $a$}}
  \put(10.7,5.5){\makebox(0,0){\scriptsize $a$}}
  \put(13.5,3.5){\makebox(0,0){\scriptsize $a$}}
  \put(11.7,0.5){\makebox(0,0){\scriptsize $a$}}
  \put(7.7,0.5){\makebox(0,0){\scriptsize $a$}}
  \put(3.3,1){\makebox(0,0){\scriptsize $a$}}
\end{picture}
}%
\put(4,8){%
\begin{picture}(10,6)(0,0)
  \put(1,3){\circle{2}}
  \put(5,5){\circle{2}}
  \put(9,5){\circle{2}}
  \put(9,1){\circle{2}}
  \put(5,1){\circle{2}}
  \put(5,5){\circle{1.6}}
  \put(9,5){\circle{1.6}}
  \put(9,1){\circle{1.6}}
  \put(5,1){\circle{1.6}}
  \put(1.75,3.75){\vector(2,1){2.25}}
  \put(6,5){\vector(1,0){2}}
  \put(9,4){\vector(0,-1){2}}
  \put(8,1){\vector(-1,0){2}}
  \put(4,1.25){\vector(-2,1){2.25}}
  \put(2.5,4.7){\makebox(0,0){\scriptsize $a$}}
  \put(6.7,5.5){\makebox(0,0){\scriptsize $a$}}
  \put(9.5,3.5){\makebox(0,0){\scriptsize $a$}}
  \put(7.7,0.5){\makebox(0,0){\scriptsize $a$}}
  \put(3.3,1){\makebox(0,0){\scriptsize $a$}}
\end{picture}
}%
\put(8,16){%
\begin{picture}(6,6)(0,0)
  \put(1,3){\circle{2}}
  \put(5,5){\circle{2}}
  \put(5,1){\circle{2}}
  \put(5,5){\circle{1.6}}
  \put(5,1){\circle{1.6}}
  \put(1.75,3.75){\vector(2,1){2.25}}
  \put(5,4){\vector(0,-1){2}}
  \put(4,1.25){\vector(-2,1){2.25}}
  \put(2.5,4.7){\makebox(0,0){\scriptsize $a$}}
  \put(5.5,3.5){\makebox(0,0){\scriptsize $a$}}
  \put(3.3,1){\makebox(0,0){\scriptsize $a$}}
\end{picture}
}
\put(8,24){%
\begin{picture}(6,2)(0,0)
  \put(1,1){\circle{2}}
  \put(5,1){\circle{2}}
  \put(5,1){\circle{1.6}}  
  \put(1.85,1.5){\vector(1,0){2.3}}
  \put(4.15,0.5){\vector(-1,0){2.3}}
  \put(2.7,2){\makebox(0,0){\scriptsize $a$}}
  \put(3.3,0){\makebox(0,0){\scriptsize $a$}}
\end{picture}
}
%
\put(-1,25){\vector(1,0){1}}
\put(1,25){\circle{2}}
\put(2,25){\vector(1,0){6}}
\put(1.75,24.30){\vector(3,-2){6.7}}
\put(1.45,24.1){\vector(1,-4){3.05}}
\put(1,24){\vector(0,-1){20}}
\put(3.5,25.7){\makebox(0,0){\scriptsize $b$}}
\put(3.5,23.8){\makebox(0,0){\scriptsize $b$}}
\put(2.5,21.7){\makebox(0,0){\scriptsize $b$}}
\put(0.5,21.7){\makebox(0,0){\scriptsize $b$}}

\end{picture}
}
\put(25,0){
\begin{picture}(15,26)(-1,0)
\put(0,0){%
\begin{picture}(14,6)(0,0)
  \put(1,3){\circle{2}}
  \put(5,5){\circle{2}}
  \put(9,5){\circle{2}}
  \put(13,5){\circle{2}}
  \put(13,1){\circle{2}}
  \put(9,1){\circle{2}}
  \put(5,1){\circle{2}}
  \put(1,3){\circle{1.6}}
  \put(5,5){\circle{1.6}}
  \put(9,5){\circle{1.6}}
  \put(13,5){\circle{1.6}}
  \put(9,1){\circle{1.6}}
  \put(5,1){\circle{1.6}}
  \put(1.75,3.75){\vector(2,1){2.25}}
  \put(6,5){\vector(1,0){2}}
  \put(10,5){\vector(1,0){2}}
  \put(13,4){\vector(0,-1){2}}
  \put(12,1){\vector(-1,0){2}}
  \put(8,1){\vector(-1,0){2}}
  \put(4,1.25){\vector(-2,1){2.25}}
  \put(2.5,4.7){\makebox(0,0){\scriptsize $a$}}
  \put(6.7,5.5){\makebox(0,0){\scriptsize $a$}}
  \put(10.7,5.5){\makebox(0,0){\scriptsize $a$}}
  \put(13.5,3.5){\makebox(0,0){\scriptsize $a$}}
  \put(11.7,0.5){\makebox(0,0){\scriptsize $a$}}
  \put(7.7,0.5){\makebox(0,0){\scriptsize $a$}}
  \put(3.3,1){\makebox(0,0){\scriptsize $a$}}
\end{picture}
}%
\put(4,8){%
\begin{picture}(10,6)(0,0)
  \put(1,3){\circle{2}}
  \put(5,5){\circle{2}}
  \put(9,5){\circle{2}}
  \put(9,1){\circle{2}}
  \put(5,1){\circle{2}}
  \put(1,3){\circle{1.6}}
  \put(5,5){\circle{1.6}}
  \put(9,5){\circle{1.6}}
  \put(5,1){\circle{1.6}}
  \put(1.75,3.75){\vector(2,1){2.25}}
  \put(6,5){\vector(1,0){2}}
  \put(9,4){\vector(0,-1){2}}
  \put(8,1){\vector(-1,0){2}}
  \put(4,1.25){\vector(-2,1){2.25}}
  \put(2.5,4.7){\makebox(0,0){\scriptsize $a$}}
  \put(6.7,5.5){\makebox(0,0){\scriptsize $a$}}
  \put(9.5,3.5){\makebox(0,0){\scriptsize $a$}}
  \put(7.7,0.5){\makebox(0,0){\scriptsize $a$}}
  \put(3.3,1){\makebox(0,0){\scriptsize $a$}}
\end{picture}
}%
\put(8,16){%
\begin{picture}(6,6)(0,0)
  \put(1,3){\circle{2}}
  \put(5,5){\circle{2}}
  \put(5,1){\circle{2}}
  \put(1,3){\circle{1.6}}
  \put(5,5){\circle{1.6}}
  \put(1.75,3.75){\vector(2,1){2.25}}
  \put(5,4){\vector(0,-1){2}}
  \put(4,1.25){\vector(-2,1){2.25}}
  \put(2.5,4.7){\makebox(0,0){\scriptsize $a$}}
  \put(5.5,3.5){\makebox(0,0){\scriptsize $a$}}
  \put(3.3,1){\makebox(0,0){\scriptsize $a$}}
\end{picture}
}
\put(8,24){%
\begin{picture}(6,2)(0,0)
  \put(1,1){\circle{2}}
  \put(5,1){\circle{2}}
  \put(5,1){\circle{1.6}}  
  \put(1.85,1.5){\vector(1,0){2.3}}
  \put(4.15,0.5){\vector(-1,0){2.3}}
  \put(2.7,2){\makebox(0,0){\scriptsize $a$}}
  \put(3.3,0){\makebox(0,0){\scriptsize $a$}}
\end{picture}
}
%
\put(-1,25){\vector(1,0){1}}
\put(1,25){\circle{2}}
\put(2,25){\vector(1,0){6}}
\put(1,24){\vector(0,-1){4}}
\put(3.5,25.7){\makebox(0,0){\scriptsize $b$}}
\put(0.5,23){\makebox(0,0){\scriptsize $a$}}

\put(1,19){\circle{2}}
\put(2,19){\vector(1,0){6}}
\put(1,18){\vector(0,-1){4}}
\put(3.5,19.7){\makebox(0,0){\scriptsize $b$}}
\put(0.5,17){\makebox(0,0){\scriptsize $a$}}

\put(1,13){\circle{2}}
\put(2,12.75){\vector(2,-1){2.25}}
\put(1,12){\vector(0,-1){4}}
\put(3.5,12.7){\makebox(0,0){\scriptsize $b$}}
\put(0.5,11){\makebox(0,0){\scriptsize $a$}}

\put(1,7){\circle{2}}
\put(1,6){\vector(0,-1){2}}
\put(0.5,5){\makebox(0,0){\scriptsize $b$}}

\end{picture}
}
\end{picture}
\end{center}
\caption{\emph{(Left)} The \ownfa~$A$ of Example~\ref{example}.  \emph{(Right)} The Parikh equivalent \owdfa~$A'$.}
\label{fig:L}
\end{figure}
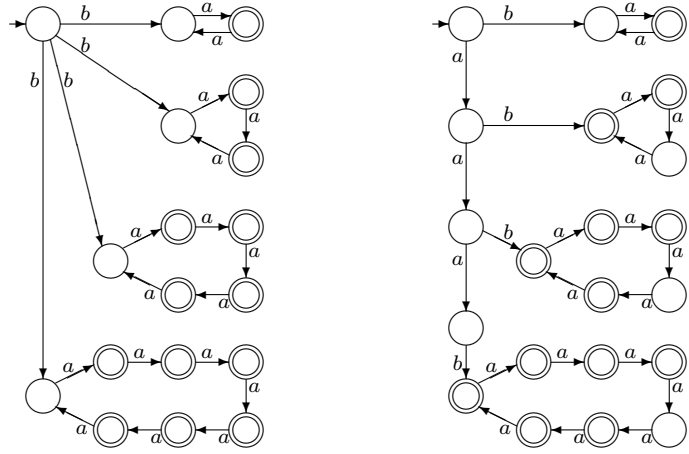

However, we can build a complete \owdfa~$A'$ with only $22$ states, accepting a language $L'$ Parikh equivalent to $L$.
To do that, for $i=1,\ldots,4$, we replace each language $L_i$ with a Parikh equivalent language $L'_i$ in such a way 
that all the words in~$L'_i$ begin with the prefix $a^{i-1}b$, and then we define $L'$ as the union of the resulting 
languages, namely:
\begin{itemize}
\item $L'_1=\{ba^n\mid n\bmod 2\neq 0\}=L_1$\,,
\item $L'_2=\{aba^{n-1}\mid n\bmod 3\neq 0\}$\,,
\item $L'_3=\{a^2ba^{n-2}\mid n\bmod 5\neq 0\}$\,,
\item $L'_4=\{a^3ba^{n-3}\mid n\bmod 7\neq 0\}$\,,
\item $L'=L'_1\cup L'_2\cup L'_3\cup L'_4$\,.
\end{itemize}
In this way, given an input word $w$, after reading the first 4 input symbols, \emph{in a deterministic way}~$A'$ 
can decide to which language $L'_i$, $1\leq i\leq 4$, test the membership
of $w$, in order to decide whether or not $w\in L'$.

The automaton~$A'$ is depicted in Fig.~\ref{fig:L}\emph{(Right)}. The vertical path starting from the initial
state is used to select, depending on the position of the letter~$b$, one of the loops, i.e., to select which
language $L'_i$ must be used to decide the membership of the input to $L'$. (Of course, when the symbol $b$ does not
appear in the prefix of length $4$, the automaton rejects by entering a dead state, which is not depicted.)

The loops of~$A'$ are obtained by suitably ``unrolling'' the loops of the original \ownfa~$A$. The unrolled
parts of the loops are moved before $b$-transitions and merged together in the vertical path which starts from the
initial state.\qed
\end{example}

\begin{lemma}\label{lem:representation}
  There exists a polynomial~$p$ such that for each $n$-state \ownfa\  $A$ over $\Sigma$,
  the Parikh image of the language accepted by~$A$ can be written as
	\begin{equation}\label{eq:nf}
	\psi(L(A)) = Y\cup\bigcup_{i \in I} Z_i\,,
	\end{equation}
	where:
	\begin{itemize}
	\item $Y\subseteq\mathbb{N}^m$ is a finite set of vectors whose components are bounded by $p(n)$;
	\item $I$ is a set of at most $p(n)$ indices;
	\item for each $i\in I$, $Z_i \subseteq \mathbb{N}^m$ is a linear set of the form: 
	\begin{equation}\label{eq:Zi}
		Z_i = \{\Vec{v}_{i, 0} + n_1 \Vec{v}_{i, 1} + n_2 \Vec{v}_{i, 2} + \cdots + n_{k_i} \Vec{v}_{i, k_i}\mid n_1,n_2, \ldots, n_{k_i} \in \mathbb{N}\}\,,
	\end{equation}
	with:
	\begin{itemize}
	\item $0\le k_i\le m$,
	\item the components of the offset $\Vec{v}_{i, 0}$ are bounded by $p(n)$, 
	\item the generators $\Vec{v}_{i, 1},\Vec{v}_{i, 2}, \ldots, \Vec{v}_{i, k_i}$ are linearly independent 
	vectors from $\{0, 1, \ldots, n\}^m$.
	\end{itemize}
	\end{itemize}
	Futhermore, if all the words in $L(A)$ are nonunary then
	for each $i \in I$ we can choose a {\it nonunary} vector $\Vec{x}_i \in \Pred(\Vec{v}_{i, 0})$ such that all
	those chosen vectors are pairwise distinct. 
\end{lemma}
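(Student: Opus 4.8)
The plan is to work inside the transition graph of $A$ and describe $\psi(L(A))$ through the structure of its accepting walks. Since the Parikh image of a walk depends only on the multiset of edges it traverses, it suffices to describe the realizable edge-multisets. First I would peel cycles off an accepting walk, obtaining a decomposition of its Parikh image as $\psi(P)+\sum_\ell \psi(\gamma_\ell)$, where $P$ is a simple path from the initial state to a final state and each $\gamma_\ell$ is a simple cycle; conversely a simple path together with a multiset of simple cycles is realized by a genuine walk exactly when each cycle attaches (shares a state) to the connected region already spanned by $P$ and the previously attached cycles. A simple path visits at most $n$ states and a simple cycle has length at most $n$, so $\psi(P)$ has components at most $n$ and every simple-cycle image lies in $\{0,1,\ldots,n\}^m$. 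Grouping accepting walks by the connected skeleton of visited states then yields a representation of $\psi(L(A))$ as a union of linear sets: the offset of each piece is a fixed simple-path image plus at most $n$ attaching cycles, hence has polynomially bounded components, and the generators are the simple-cycle images available inside that region. Because $m$ is fixed, there are at most $(n+1)^m$ distinct generators and only polynomially many skeletons, so the number of pieces is polynomial in $n$.

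The second step is to force the generators of each piece to be linearly independent, which automatically caps their number at $m$ (as recalled just before Lemma~\ref{lem:indep}). Given a piece with generators $g_1,\ldots,g_N\in\{0,\ldots,n\}^m$, I would decompose the monoid they generate by triangulating their cone into simplicial subcones, each spanned by a linearly independent subset, and splitting each subcone by its fundamental parallelepiped; this rewrites the monoid as a finite union of sets of the form (bounded offset) $+\ \mathbb{N}$-span of a linearly independent subset of the $g_j$. The fundamental-domain offsets are bounded by $\sum_j|g_j|\le m\cdot n$ added to the original offset, so they stay polynomial, and the number of new pieces is at most $\binom{N}{m}$ times the number of representatives, again polynomial for fixed $m$. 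Pieces with no generator are single polynomially bounded vectors, and I would sweep all of these into the finite set $Y$, arriving at the stated form with $0\le k_i\le m$, polynomially bounded offsets, and linearly independent generators from $\{0,\ldots,n\}^m$.

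For the final assertion, assume every word of $L(A)$ is nonunary; then every vector of $\psi(L(A))$, in particular every offset $\Vec{v}_{i,0}$ (take all $n_j=0$), has at least two nonzero components. After the previous step each surviving $i\in I$ has $k_i\ge 1$, so a generator $\Vec{v}_{i,1}$ is available. I would inflate offsets: rewriting $Z_i$ as the union of $\{\Vec{v}_{i,0}+N\Vec{v}_{i,1}+n_2\Vec{v}_{i,2}+\cdots\}$ with the boundedly many lower pieces obtained by fixing $n_1<N$ lets me push the offset to $\Vec{v}_{i,0}+N\Vec{v}_{i,1}$ while adding only polynomially many pieces. Since $\Vec{v}_{i,1}$ touches some coordinate $l$ and $\Vec{v}_{i,0}$ already has a nonzero coordinate $r\neq l$, the inflated offset dominates the $N$ distinct nonunary vectors $j\,\Vec{e}_l+\Vec{e}_r$ with $1\le j\le N$. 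Taking $N\ge |I|$, each offset then has at least $|I|$ distinct nonunary predecessors, so a greedy pass (equivalently Hall's theorem) selects pairwise distinct nonunary $\Vec{x}_i\in\Pred(\Vec{v}_{i,0})$, one per piece.

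The hard part will be the first step: making the cycle-peeling decomposition \emph{exact}, so that the union of linear sets captures precisely the realizable (connected) walks and not spurious disconnected combinations, while simultaneously keeping the offset, the number of pieces, and the generator count all polynomial. The triangulation and fundamental-domain reduction of the second step is where the bookkeeping is heaviest, and the inflation device is what rescues the distinctness requirement in the nonunary case even when several offsets would otherwise coincide.
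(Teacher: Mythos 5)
Your architecture (a union of linear pieces plus a finite set $Y$, with offsets enlarged by unrolling generators) parallels the paper's, but both halves of your argument have genuine gaps. For the representation itself, the paper does not re-derive it: the first part of the lemma is imported wholesale from Kopczynski and To, and the quantitative claims you substitute for that citation do not hold up. ``Only polynomially many skeletons'' is false as stated: the connected subgraphs (and the simple paths, and the sets of attachable simple cycles) of an $n$-state \ownfa\ can be exponential in $n$, so polynomiality of $|I|$ can only be recovered by identifying pieces with equal offset/generator data \emph{after} a polynomial offset bound is in hand. And your offset bound for the triangulation step --- original offset plus $\sum_j |g_j| \le m\cdot n$ from the fundamental parallelepiped --- is wrong: the parallelepiped representative $x - \sum_{g\in B_t} \lfloor \lambda_g\rfloor g$ need not lie in the monoid $M$, so one must instead take the minimal generators of $M \cap C_t$ as a module over $\mathbb{N}B_t$, and these can be far larger. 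Already for $m=1$: with $M = \mathbb{N}a + \mathbb{N}b$, $\gcd(a,b)=1$, $a,b \le n$, rewriting $M$ over the single independent generator $a$ forces offsets up to $(a-1)b \approx n^2$, well outside your claimed window (e.g., $a=3$, $b=5$ needs the offset $10 > 8$). A polynomial bound for fixed $m$ is true, but it is precisely the hard content of the cited theorem, not bookkeeping.

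In the nonunary part, your inflation-plus-Hall scheme is circular. Splitting $Z_i$ by $n_1 < N$ versus $n_1 \ge N$ leaves $N$ \emph{infinite} lower pieces, each a linear set with generators $\Vec{v}_{i,2},\ldots,\Vec{v}_{i,k_i}$ and an uninflated offset; these remain members of $I$ and need their own pairwise distinct nonunary representatives, for which your candidate sets $\{j\,\Vec{e}_l+\Vec{e}_r \mid 1\le j\le N\}$ do nothing. Inflating them recursively multiplies the number of pieces by a factor of roughly $N$ per generator peeled off, so the requirement that $N$ exceed the \emph{final} number of pieces can never be met: already for $m=2$ you would need $N \ge |I|(N+1)$. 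The paper's construction avoids exactly this trap. When no fresh representative exists for $Z_i$, it unrolls along \emph{every} generator at once, writing $Z_i = Y_i \cup \bigcup_{j=1}^{k_i} Z_{N_I+j}$, where each $Z_{N_I+j}$ retains all $k_i$ generators (so no infinite residue is left behind with a small offset) and the leftover $Y_i$ is \emph{finite} and is swept into $Y$, which requires no representatives at all. Distinctness is then secured not by Hall's theorem but by a norm schedule: each $\Vec{x}_i$ is chosen with $\|\Vec{x}_i\| \le i$, and each new offset is pushed into the window $N_I+j \le \|\Vec{v}_{i,0}+h_j\Vec{v}_{i,j}\| < N_I+j+n$, which guarantees a fresh nonunary predecessor of norm exactly $N_I+j$ and hence termination after a single pass over the original index set, with only $\le m$ new pieces per old one.
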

\begin{proof}
  In~\cite[Thms.\ 7 and 8]{KopczynskiTo2010} it was proved that $\psi(L(A))$ can be written as claimed in the
  first part of the statement of the lemma, with $Y=\emptyset$, $I$ of size polynomial in $n$,
  and the components of each offset $\Vec{v}_{i, 0}$ bounded by~$O(n^{2m}m^{m/2})$.\footnote{%
	For the sake of completeness, we derive a rough upper bound for the cardinality of the set $I$ by counting
	the number of possible combinations of offsets and generators satisfying the given limitations.
	Since the components of the offsets are bounded by~$O(n^{2m}m^{m/2})$, the number of possible
	different offsets is~$O(n^{2m^2}m^{m^2\!/2})$.
	Furthermore, there are $(n+1)^m$ vectors in~$\{0, 1, \ldots, n\}^m$. Hence, $n^{m^2}$ is an upper bound
	for the number of possible sets of~$k$ generators, with $k=1,\ldots,m$.
	This allows us to give~$O(n^{3m^2}m^{m^2\!/2})$ as an upper bound for the cardinality of~$I$.
	We point out that in~\cite[Thm.\ 4.1]{To2010} slightly different bounds have been done.
	However, nothing is said about linearly independency of generators.\label{noteA}
}

	Now we prove the second part of the statement. Hence, let us suppose that all the words in~$L(A)$ are
	nonunary. Notice that this implies that also all the offsets~$\Vec{v}_{i, 0}$ are nonunary.

  If for each $i\in I$ we can choose $\Vec{x}_i \in \Pred(\Vec{v}_{i, 0})$ such that all
  $\Vec{x}_i$'s are pairwise different, then the proof is completed.
  
  Otherwise, we proceed as follows.
  For a vector $\Vec{v}$, let us denote by $\|\Vec{v}\|$ its \emph{infinite norm}, i.e., the value
  of its  maximum component.
  Let us suppose $I\subseteq\mathbb{N}$ and denote as $N_I$ the maximum element of $I$.
  
  By proceeding in increasing order, for $i\in I$ we choose a nonunary vector 
  $\Vec{x}_i \in \Pred(\Vec{v}_{i, 0})$
  such that $\|\Vec{x}_i\|\le i$ and $\Vec{x}_i$ is different from all already chosen $\Vec{x}_j$,
  i.e., $\Vec{x}_i\neq\Vec{x}_j$ for all $j\in I$ with $j<i$. The extra condition $\|\Vec{x}_i\|\le i$ 
  will turn out to be useful later.
  
  When for an $i\in I$ it is not possible to find such $\Vec{x}_i$,
  we replace $Z_i$ by some suitable sets. Essentially, those sets are obtained by
  enlarging the offsets using sufficiently long ``unrollings'' of the generators.
  In particular, for $j=1,\ldots,k_i$, we consider the set
	\begin{equation}\label{eq:newZi}
		Z_{N_I+j} = \{(\Vec{v}_{i, 0}+h_j\Vec{v}_{i, j}) + n_1 \Vec{v}_{i, 1} + \cdots + n_{k_i} \Vec{v}_{i, k_i}\mid n_1, \ldots, n_{k_i} \in \mathbb{N}\}\,,
	\end{equation}
	where $h_j$ is an integer satisfying the inequalities
	\begin{equation}\label{eq:hj}
		N_I+j\le \|\Vec{v}_{i, 0}+h_j\Vec{v}_{i, j}\|<N_I+j+n
	\end{equation}
	Due to the fact that $\Vec{v}_{i, j}\in\{0,\ldots,n\}^m$, we can always find such $h_j$.
	Furthermore, we consider the following finite set
	\begin{equation}\label{eq:Yi}
		Y_i = \{\Vec{v}_{i, 0} + n_1 \Vec{v}_{i, 1} + \cdots + n_{k_i} \Vec{v}_{i, k_i}\mid 0\le n_1<h_1, \ldots, 0\le n_{k_i}<h_{k_i}\}\,.
	\end{equation}
  It can be easily verified that
  \[
  Z_i=Y_i\cup\bigcup_{j=1}^{k_i}Z_{N_I+j}\,.
  \]  
  Now we replace the set of indices $I$ by the set
  \[
  \widehat I=I-\{i\}\cup\{N_I+1,\ldots,N_I+k_i\}\,,
  \]
  and the set $Y$ by $\widehat Y=Y\cup Y_i$. We continue the same process by considering the next index $i$.
  
  We notice that, since we are choosing each vector $\Vec{x}_i\in\Pred(\Vec{v}_{i, 0})$ in such a way that
  $\|\Vec{x}_i\|\le i$, when we will have to choose the vector $\Vec{x}_{N_I+j}$ for a set
  $Z_{N_I+j}$ introduced at this stage, by the condition (\ref{eq:hj})
  we will have at least one possibility (a vector with one component equal to $N_I+j$ and another component
  equal to~$1$; we remind the reader that, since the given automaton accepts only nonunary words, 
  all offsets are nonunary).
  This implies that after examining all sets $Z_i$ corresponding to the original set $I$, we do not need
  to further modify the sets introduced during this process. Hence, this procedure ends in a finite number of steps.
  
  Furthermore, for each $Z_i$ in the initial representation, we introduced at most $m$
  sets. Hence, the cardinality $\widetilde N$ of the set of indices resulting at the end of this process 
  is still polynomial.\footnote{%
  Since the cardinality of the set of indices, before the transformation, was~$O(n^{3m^2}m^{m^2\!/2})$ 
  (cf. Note~\ref{noteA}),
  the cardinality~$\widetilde N$ after the transformation is $O(n^{3m^2}m^{m^2\!/2+1})$.\label{noteB}%
}
  
  By (\ref{eq:hj}) the components of the offsets which have been added in this process cannot
  exceed $\widetilde N + n$. Hence, it turns out that $m\cdot(\widetilde N+n)$ is an upper
  bound to the components of vectors in $Y_i$. This permit to conclude that $p(n)=m\cdot(\widetilde N+n)$
  is an upper bound for all these amounts.\footnote{%
  Hence $p(n)=O(n^{3m^2}m^{m^2/2+2})$.\label{noteC}%
}
\end{proof}

Now we are able to consider the case of automata accepting only words that are nonunary.

\begin{theorem}\label{thm:nonunary_NFA_to_DFA}
	For each $n$-state \ownfa\ over~$\Sigma$, accepting a language none of whose words are unary, there exists a Parikh equivalent \owdfa\ with a number   
	of states polynomial in $n$. 
\end{theorem}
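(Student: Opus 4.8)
The plan is to realize $\psi(L(A))$ by a \emph{Parikh-equivalent} language $L'$ whose words are laid out so that a single deterministic automaton can parse them, keeping the state count polynomial by gluing the pieces together with a \emph{sum} rather than a \emph{product} construction. Throughout, I would use that $\Sigma$ is fixed, so $m$ (and hence every $k_i\le m$) is a constant and any bound polynomial in $n$ with constants depending on $m$ is acceptable. First I would apply Lemma~\ref{lem:representation} to write $\psi(L(A))=Y\cup\bigcup_{i\in I}Z_i$, where $|I|$ and $|Y|$ are polynomial, all offsets and all components of the vectors of $Y$ are bounded by $p(n)$, each $Z_i$ has $k_i\le m$ \emph{linearly independent} generators in $\{0,\dots,n\}^m$, and — crucially, because every word of $L(A)$ is nonunary — pairwise distinct nonunary vectors $\Vec{x}_i\in\Pred(\Vec{v}_{i,0})$. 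The finite part $Y$ is handled directly: writing each $\Vec{v}\in Y$ in canonical form $a_1^{\Vec{v}[1]}\cdots a_m^{\Vec{v}[m]}$ gives a finite language inside $a_1^*\cdots a_m^*$ whose prefixes are again canonical, so its minimal \owdfa\ has only $O(p(n)^m)$ states, polynomial in $n$.

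Each linear piece $Z_i$ I would realize by the language $\{w_{i,0}\}\cdot\{w_{i,1}\}^*\{w_{i,2}\}^*\cdots\{w_{i,k_i}\}^*$, where $\psi(w_{i,j})=\Vec{v}_{i,j}$ for $j=0,1,\dots,k_i$; since applying the generator loops in a fixed order does not alter letter counts, this language has Parikh image exactly $Z_i$, and as an \ownfa\ it needs only $O(m\cdot p(n))$ states. The delicate point is determinism in the loop region: having finished the $w_{i,j}$-loop, the automaton must recognize deterministically that the next block is $w_{i,j+1}$. Here I would exploit linear independence through Lemma~\ref{lem:indep}: reordering the generators according to their distinguished coordinates $t_1,\dots,t_{k_i}$ (where $\Vec{v}_{i,j}[t_j]\neq0$ and the $t_j$ are distinct) and laying out each block canonically lets the automaton locate every loop boundary from the letters it reads, so that the resulting language $L'_i$ is in fact accepted by a \owdfa\ with polynomially many states.

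Finally I would assemble the single automaton. Each $L'_i$, and each singleton realizing a vector of $Y$, is arranged to begin with a distinguishing prefix realizing the nonunary vector $\Vec{x}_i\preceq\Vec{v}_{i,0}$; since $\Vec{x}_i\preceq\Vec{v}_{i,0}$, the remaining offset $\Vec{v}_{i,0}-\Vec{x}_i$ is still nonnegative and can be produced after the prefix, so the Parikh image of the branch is unchanged. Because the $\Vec{x}_i$ are pairwise distinct, have infinite norm bounded by a polynomial, and $m$ is fixed, a single shared ``spine'' that reads the initial segment while tracking the Parikh image of the part read so far — each coordinate capped at the (polynomial) maximal norm — uses only $O(p(n)^m)$ states, and can deterministically route the computation into the unique branch $i$ (or into the $Y$-checker, or to a dead state). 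The branch gadgets are glued onto the leaves of this spine, so the total number of states is the \emph{sum} of the polynomially many polynomial-size pieces, hence polynomial in $n$, and the accepted language $L'$ satisfies $\psi(L')=Y\cup\bigcup_{i}Z_i=\psi(L(A))$, as required.

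I expect the genuine obstacle to lie in the two determinism requirements. The first is parsing the loop region $\{w_{i,1}\}^*\cdots\{w_{i,k_i}\}^*$ of each $Z_i$ deterministically, where linear independence via Lemma~\ref{lem:indep} is the essential tool for separating consecutive blocks. The second is guaranteeing that the distinguishing prefixes steer each input into a \emph{unique} branch even though the $\Vec{x}_i$ need only be distinct, not an antichain under $\preceq$; this is exactly the point that the extra condition $\|\Vec{x}_i\|\le i$ built into the proof of Lemma~\ref{lem:representation} is designed to control, by letting the spine resolve comparabilities in a fixed, deterministic order.
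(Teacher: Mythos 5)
Your overall architecture follows the paper: invoke Lemma~\ref{lem:representation}, build one small deterministic gadget per linear set $Z_i$, handle $Y$ by a separate polynomial-size \owdfa, and glue. The loop-region part is essentially the paper's argument, and it is sound once stated precisely: Lemma~\ref{lem:indep} gives distinct coordinates $t_1,\dots,t_{k_i}$ with $\Vec{v}_{i,j}[t_j]>0$, and one must \emph{rotate} (cyclically shift) each generator word so that the $j$-th loop \emph{begins} with $a_{t_j}$ --- purely canonical layout is not enough, since distinct canonical words may share a first letter. With the rotation, the paper even gets the full interleaving $\{w_{i,1},\dots,w_{i,k_i}\}^*$ deterministically from a single junction state, so your ordered product of stars is an immaterial variation.

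The genuine gap is in your assembly step, and it is exactly the point you flag but do not resolve. A spine that tracks the (capped) Parikh image of the prefix read so far cannot route deterministically, precisely because the $\Vec{x}_i$ need not form an antichain under $\preceq$: if $\Vec{x}_i \prec \Vec{x}_j$, the spine state recording the count $\Vec{x}_i$ lies on paths headed for $\Vec{x}_j$, so that state must simultaneously act as the entry of gadget $i$ (whose next letters are those of $g(\Vec{v}_{i,0}-\Vec{x}_i)$ and then the loops) and carry spine transitions toward $\Vec{x}_j$; these compete for the same input letters, and a \owdfa\ can follow only one, losing words of one of the two branches. The condition $\|\Vec{x}_i\|\le i$ cannot repair this: inside the proof of Lemma~\ref{lem:representation} it serves only to guarantee that, for the freshly created offsets $\Vec{v}_{i,0}+h_j\Vec{v}_{i,j}$, an as-yet-unused nonunary predecessor always exists (so the greedy selection terminates); it is not exported by the lemma's statement, and it does not make the $\Vec{x}_i$ incomparable --- e.g.\ $\Vec{x}_1=(1,1)$ and $\Vec{x}_2=(1,2)$ satisfy it. The missing idea is the paper's \emph{prefix-free encoding} of the routing vectors: set $w_{i,0}=f(\Vec{x}_i)\,g(\Vec{v}_{i,0}-\Vec{x}_i)$, where $f(\Vec{v})$ is the one-step left circular shift of the canonical word $g(\Vec{v})$. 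For a \emph{nonunary} vector, $f(\Vec{x}_i)$ is nondecreasing in letter index except for a single final descent (its minimal letter moved past strictly larger ones), so the first descent marks the end of the code word; hence $f$ has the prefix property on nonunary vectors, $W=\{w_{i,0}\mid i\in I\}$ is prefix-free, and a trie \owdfa\ on $\Pref(W)$ routes every input into at most one gadget $B_i$, merged at the leaf $q_{w_{i,0}}$. This is also where the nonunarity hypothesis genuinely enters (for a unary vector the shift is the identity and the words $a^k$ are maximally non-prefix-free); in your proposal it is used only implicitly through the lemma. A final minor point: there is no need to weave the $Y$-checker into the spine --- the paper simply takes the product of the glued automaton $A'$ with the finite-set automaton $A''$, which stays polynomial since only two automata are involved.
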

\begin{proof}
Let $A$ be the given $n$-state \ownfa. According to Lemma~\ref{lem:representation}, we express the Parikh image
of $L(A)$ as in~(\ref{eq:nf}) and, starting {}from this representation, we will build a \owdfa\ $A_{\rm non}$ 
that is Parikh equivalent to $A$.
To this end, we could apply the following procedure:
\begin{enumerate}
	\item\label{step:dfa1}For each $i\in I$, build a \owdfa\ $A_i$ such that $\psi(L(A_i))=Z_i$.
	\item\label{step:dfa2}{}From the automata $A_i$'s so obtained, derive a \owdfa\ $A'$ such that $\psi(L(A'))=\bigcup_{i \in I} Z_i$.
	\item\label{step:dfa3}Define a \owdfa\ $A''$ such that $\psi(L(A''))=Y$.
	\item\label{step:dfa4}From $A'$ and $A''$, using the standard construction for the union, build a~\owdfa~$A_{\rm non}$ such 
	that~$L(A_{\rm non})=L(A')\cup L(A'')$ and, hence, $\psi(L(A_{\rm non}))=Y\cup\bigcup_{i \in I} Z_i$,
	i.e., $A_{\rm non}$ is Parikh equivalent to~$A$.
\end{enumerate}
Actually, we will use a variation of this procedure. In particular, steps~\ref{step:dfa1} and~\ref{step:dfa2}, to
obtain~$A'$, are modified as we now explain.

\begin{figure}[tb]
\begin{center}
\includegraphics[scale=0.7]{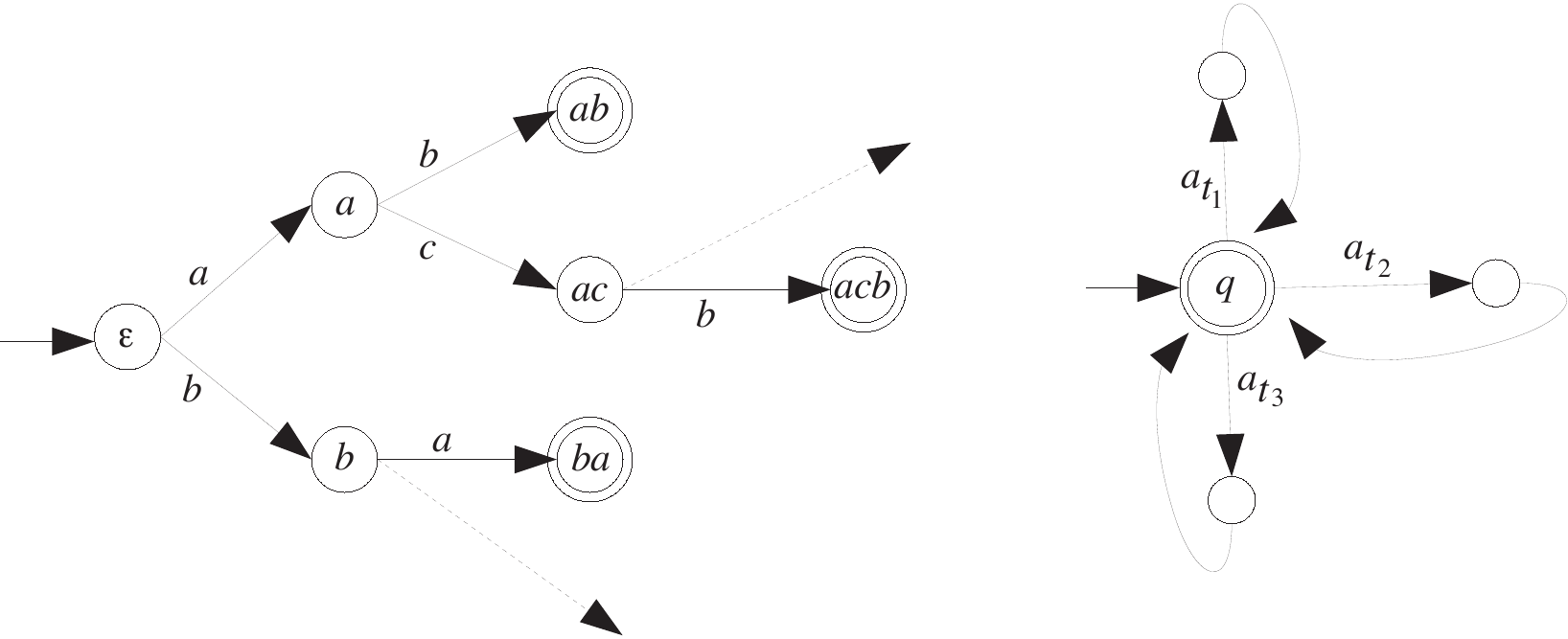}
\end{center}
\caption{
	\emph{(Left)} A construction of \owdfa\ $A_W$, where the state $q_u$ is simply denoted by $u$ for clarity. 
	\emph{(Right)} A construction of \owdfa\ $B_i$ that accepts $\{w_{i, 1}, w_{i, 2}, \ldots, w_{i, k_i}\}^*$ for $k_i= 3$.
	In the construction of the final \owdfa\ $A_{\rm non}$, if $w_{i, 0} = acb$, then the initial state $q$ of $B_i$ is merged with the state $q_{acb}$ of~$A_W$. 
} 
\label{fig:DFA}
\end{figure}

\smallskip

Let us start by considering $i\in I$.
First, we handle the generators of $Z_i$. To this aim,
let us consider the function $g: \mathbb{N}^m \to \Sigma^*$ defined by
\[
g(\Vec{v}) = a_1^{i_1} a_2^{i_2} \cdots a_m^{i_m}\,,
\]
for each vector $\Vec{v} = (i_1, \ldots, i_m)\in\mathbb{N}^m$.

Using this function, we map the generators $\Vec{v}_{i, 1}, \ldots, \Vec{v}_{i, k_i}$ into the words 
\[
s_{i, 1}= g(\Vec{v}_{i, 1}),\  s_{i, 2}= g(\Vec{v}_{i, 2}),\ \ldots,\ s_{i, k_i}= g(\Vec{v}_{i, k_i})\,.
\] 
It is easy to define an automaton accepting the language $\{s_{i, 1}, s_{i, 2}, \ldots, s_{i, k_i}\}^*$, which consists of a start state $q$ with $k_i$ loops labeled with $s_{i, 1}, s_{i, 2}, \ldots, s_{i, k_i}$, respectively. 
The state $q$ is the only  accepting state. 
However, this automaton is nondeterministic. 

To avoid this problem, we modify the language by replacing each $s_{i, j}$, for $j = 1, \ldots, k_i$, with a Parikh equivalent word $w_{i, j}$ in such a way that for all pairwise different $j,j'$ the
corresponding words $w_{i, j}$ and $w_{i, j'}$ begin with different letters. 
This is possible due to the fact, being $\Vec{v}_{i, 1}, \ldots, \Vec{v}_{i, k_i}$ linearly independent,
according to Lemma~\ref{lem:indep} we can find $k_i$ different letters 
$a_{t_1}, a_{t_2}, \ldots, a_{t_{k_i}} \in \Sigma$ such that $\Vec{v}_{i, j}[t_j] > 0$ for $j = 1, \ldots, k_i$.
We ``rotate'' each $s_{i, j}$ by a cyclic shift so that the resulting word, $w_{i, j}$, begins with an occurrence of the letter $a_{t_j}$. 
Then $w_{i, j}$ is Parikh equivalent to $s_{i, j}$. 
For example, if $s_{i, j} = a_1^3 a_2^4 a_3$ and $t_j = 2$, then $w_{i, j}$ should be chosen as $a_2^4 a_3 a_1^3$. 

The construction of a \owdfa\ $B_i$ with one unique accepting state $q$ that accepts $\{w_{i, 1}, w_{i, 2}, \ldots, w_{i, k_i}\}^*$ must be now clear: $q$ with $k_i$ loops labeled with these respective $k_i$ words 
(see Fig.~\ref{fig:DFA}\emph{(Right)}). 
Furthermore, due to the limitations deriving {}from Lemma~\ref{lem:representation}, 
the length of these loops is at most $mn$ so that this \owdfa\ contains at most $1+m(mn-1)$ states. 

\smallskip
Now, we can modify this \owdfa\ in order to build an automaton~$A_i$ recognizing a language whose Parikh image is $Z_i$.
To this aim, it is enough to add a path which from an initial state, reading a word $w_{i, 0}$ with Parikh
image~$\Vec{v}_{i, 0}$, reaches the state~$q$ and then the part accepting $\{w_{i, 1}, w_{i, 2}, \ldots, w_{i, k_i}\}^*$
that we already described.
Due to the limitations on $\Vec{v}_{i, 0}$, this can be done by adding a polynomial number of states.
In particular, we could take $w_{i, 0}=g(\Vec{v}_{i, 0})$, thus completing step~\ref{step:dfa1}.
However, when we have such \owdfas\ for all $i\in I$, by applying the standard construction for the union to them, 
as in step~\ref{step:dfa2}, being $I$
polynomial in~$n$, the resulting \owdfa\ could have exponentially many states in~$n$, namely it could be too large
for our purposes.

To avoid this problem, the automaton~$A'$ which should be derived from steps~\ref{step:dfa1}-\ref{step:dfa2},
is obtained by using a different strategy.
We introduce the function $f: \mathbb{N}^m \to \Sigma^*$ defined as: for $\Vec{v} \in \mathbb{N}^m$, $f(\Vec{v}) = \mbox{$\hookleftarrow$}(g(\Vec{v}))$, where $\hookleftarrow$ denotes the 1-step left circular shift. 
For example, $f(4, 1, 2, 0, \ldots, 0) = a_1^3 a_2 a_3^2 a_1$.
It can be verified that the 1-step left circular shift endows $f$ with the prefix property {\it over the nonunary vectors}, that is, for any $\Vec{u}, \Vec{v} \in \mathbb{N}^m$ that are nonunary, if $f(\Vec{u})$ is a prefix of $f(\Vec{v})$, then $\Vec{u} = \Vec{v}$. 
Let 
\[
w_{i, 0} = f(\Vec{x}_i) g(\Vec{v}_{i, 0} - \Vec{x}_i)\,,
\] 
where $\Vec{x}_i\in\Pred(\Vec{v}_{i, 0})$ is given by Lemma~\ref{lem:representation}.
Clearly, $\psi(w_{i, 0})=\Vec{v}_{i, 0}$.
We now consider the finite language
\[
W = \{w_{i, 0} \mid i \in I\}\,.
\]
Because the $\Vec{x}_{i}$'s are nonunary and $f$ has the prefix property over nonunary words,
the language $W$ is prefix-free. 
We build a (partial) \owdfa\ that accepts~$W$\!, which is denoted by $A_W = (Q_W, \Sigma, q_\varepsilon, \delta_W, F_W)$, where:
\begin{itemize}
\item $Q_W = \{q_u \mid u \in \Pref(W)\}$,
\item the state $q_\varepsilon$ corresponding to the empty word is the initial state,
\item $F_W = \{q_u \mid u \in W\}$,
\item $\delta_W$ is defined as: for $u \in \Pref(W)$ and $a \in \Sigma$, if $ua \in \Pref(W)$, then $\delta(q_u, a) = q_{ua}$, while $\delta(q_u, a)$ is undefined otherwise.
\end{itemize}
See Fig.~\ref{fig:DFA}\emph{(Left)}. 
Clearly, this accepts $W$\!. 
Since the longest word(s) in $W$ is of length $m\cdot p(n)$, this \owdfa\ contains at most 
$1+|I| \cdot m \cdot p(n)=O(mp^2(n))$ states.

It goes without saying that each accepting state of this \owdfa\ is only for one word in $W$\!,
namely, two distinct words in~$W$ are accepted by~$A_W$ at distinct two states. 
Now, based on $A_W$ and the \owdfas\ $B_i$ with $i \in I$, we can build a finite automaton that accepts the language $\bigcup_{i \in I} w_{i, 0} L(B_i)$ {\it without introducing any new state}. 
This is simply done by merging $q_{w_{i, 0}}$ with the start state of $B_i$. 
Given an input $u$, the resulting automaton $A'$ simulates the \owdfa\ $A_W$, looking for a prefix $w$ of $u$ such that $w \in W$\!. 
When such a prefix is found, $A'$ starts simulating $B_i$ on the remaining suffix $z$, where $i$ is the index such that $w = w_i$. 
Since $W$ is prefix-free, we need only to consider one decomposition of the input as $u = wz$. 
This implies that $A'$ is deterministic. 
Finally, we observe that
$A'$ contains at most $O(mp^2(n))+ |I| (1+ m(mn-1))=O(mp(n)(p(n)+mn))$ states, i.e.,
a number which is polynomial in $n$.

\smallskip

We now sketch the construction of a \owdfa\ $A''$ accepting a language $L_Y$ whose Parikh image is $Y$
(step~\ref{step:dfa3}).
We just take $L_Y=\{g(\Vec{v})\mid\Vec{v}\in Y\}$. Let $M$ be the maximum of the components
of vectors in $Y$. With each $\Vec{v}\in\{0,\ldots,M\}^m$, we associate a state
$q_{\Vec{v}}$ which is reachable from the initial state by reading the word
$g(\Vec{v})$. Final states are those corresponding to vectors in $Y$.
The automaton $A''$ so obtained has $(M+1)^m=(p(n)+1)^m$ states, a number polynomial in $n$.

Finally, by applying the standard construction for the union (step~\ref{step:dfa4}), from automata $A'$ and $A''$
we obtain the \owdfa~$A_{\rm non}$ Parikh equivalent to the given \ownfa\ $A$,
with number of states polynomial in~$n$.\footnote{%
Assuming $p(n)\ge mn$, the number of states of~$A'$ is~$O(mp^2(n))$.
Hence, the number of states of~$A_{\rm non}$ is~$O(mp^{m+2}(n))$.
Using~$p(n)=O(n^{3m^2}m^{m^2/2+2})$~(cf.\ Note~\ref{noteC}), we conclude that the number of state
of~$A_{\rm non}$ is~$O(mn^{3m^2(m+2)}m^{(m^2/2+2)(m+2)}) = O(n^{3m^3+6m^2}m^{m^3/2+m^2+2m+5})$.
Hence, it is polynomial in the number of states of the original \ownfa~$A$,
but exponential in the alphabet size~$m$.\label{noteD}
}
\end{proof}

We now switch to the  general case. We prove that for each input alphabet
the state cost of the conversion of \ownfas\
into Parikh equivalent \owdfas\ is the same as for the unary alphabet.

\begin{theorem}\label{thm:NFA_to_DFA}
	For each $n$-state \ownfa\ over~$\Sigma$, there exists a Parikh equivalent \owdfa\ with $e^{O(\sqrt{n \cdot \ln n})}$ states. 
	Furthermore, this cost is tight. 
\end{theorem}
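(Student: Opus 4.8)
The plan is to reduce the general case to the two special cases already under control: the nonunary part, handled in Theorem~\ref{thm:nonunary_NFA_to_DFA}, and the unary parts, handled by Chrobak's conversion (Theorem~\ref{thm:Chrobak86}). Since the alphabet $\Sigma$ is fixed, $m$ is a constant, and the whole upper-bound argument rests on the observation that a product of constantly many factors of the form $e^{O(\sqrt{n\ln n})}$, multiplied by a polynomial in $n$, remains $e^{O(\sqrt{n\ln n})}$.

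For the upper bound, I first apply Lemma~\ref{lemma:decREG} to the given $n$-state \ownfa~$A$, obtaining a nonunary \ownfa~$A_0$ with $n(m+1)+1$ states accepting the nonunary part $L_0$ of $L(A)$, together with $m$ unary \ownfas~$A_1,\ldots,A_m$, each with $n$ states, accepting the unary parts $L_i = L(A)\cap\{a_i\}^*$. Since $L(A)=\bigcup_{i=0}^m L_i$, it suffices to build a Parikh equivalent \owdfa\ for each $L_i$ and then take their union. For $A_0$, Theorem~\ref{thm:nonunary_NFA_to_DFA} yields a Parikh equivalent \owdfa~$A_0'$ whose number of states is polynomial in $n$ (for fixed $m$). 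For each unary $A_i$, since Parikh equivalence coincides with ordinary equivalence on unary languages, Theorem~\ref{thm:Chrobak86} yields an equivalent unary \owdfa~$A_i'$ with $e^{O(\sqrt{n\ln n})}$ states. Applying the product construction for the union of the $m+1$ automata $A_0',A_1',\ldots,A_m'$ then produces a \owdfa~$A'$ with $\psi(L(A'))=\bigcup_{i=0}^m\psi(L_i)=\psi(L(A))$, i.e., Parikh equivalent to $A$. Its state count is the product of the sizes, namely $p(n)\cdot\prod_{i=1}^m e^{O(\sqrt{n\ln n})}=p(n)\cdot e^{O(m\sqrt{n\ln n})}$. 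Because $m$ is constant and $\ln p(n)=O(\ln n)=o(\sqrt{n\ln n})$, the polynomial factor is absorbed and the bound simplifies to $e^{O(\sqrt{n\ln n})}$, as required.

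For tightness, I reduce to the unary lower bound already contained in Theorem~\ref{thm:Chrobak86}. That theorem provides, for infinitely many $n$, an $n$-state unary \ownfa~$A_n$ (over $\{a_1\}$, viewed inside $\Sigma$) whose minimal equivalent \owdfa\ has $e^{\Omega(\sqrt{n\ln n})}$ states. Let $A'$ be any \owdfa\ over $\Sigma$ that is Parikh equivalent to $A_n$. Since $\psi(L(A_n))$ consists of vectors whose only possibly nonzero component is the first, every word accepted by $A'$ lies in $\{a_1\}^*$; hence $L(A')\subseteq\{a_1\}^*$, and, two Parikh equivalent unary languages being equal, $L(A')=L(A_n)$. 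Restricting $A'$ to its $a_1$-transitions, keeping the same states, initial state, and final states, gives a unary \owdfa\ accepting $L(A_n)$ with no more states than $A'$. By Chrobak's lower bound this already requires $e^{\Omega(\sqrt{n\ln n})}$ states, so $A'$ does too.

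The genuinely difficult content of this conversion lies not in the present argument but in Theorem~\ref{thm:nonunary_NFA_to_DFA}, which guarantees that the nonunary part costs only polynomially many states; given that result, the present proof is essentially bookkeeping. The one point deserving care is the asymptotic accounting in the union step---verifying that constantly many exponential factors and a polynomial do not push the bound beyond $e^{O(\sqrt{n\ln n})}$---together with the observation, needed for tightness, that Parikh equivalence to a unary language forces the accepted language to be exactly that unary language, so that the gap is inherited entirely from the unary case.
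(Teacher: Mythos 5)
Your proof is correct and follows essentially the same route as the paper: decompose via Lemma~\ref{lemma:decREG}, convert the nonunary part with Theorem~\ref{thm:nonunary_NFA_to_DFA} and the unary parts with Theorem~\ref{thm:Chrobak86}, combine by a product construction, and inherit tightness from the unary case. The only immaterial deviations are that the paper first merges the $m$ unary \owdfas\ through a shared initial state (sum of sizes, exploiting that they read distinct letters) and then takes a single binary product with the nonunary automaton, whereas you take the product of all $m+1$ automata at once---both yield $e^{O(\sqrt{n\cdot\ln n})}$ for fixed $m$---and your tightness argument spells out explicitly what the paper leaves implicit, namely that a \owdfa\ Parikh equivalent to a unary language must accept exactly that language.
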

\begin{proof}
According to Lemma~\ref{lemma:decREG}, {}from a given $n$-state \ownfa\ $A$ with input alphabet 
$\Sigma = \{a_1, a_2, \ldots, a_m\}$, we build a \ownfa\ $A_0$ with $n(m+1)+1$ states that accepts
the nonunary part of~$L(A)$ and $m$ $n$-state \ownfas\ $A_1,A_2,\ldots,A_m$ that accept the unary parts
of~$L(A)$.
Using Theorem~\ref{thm:Chrobak86}, for $i=1,\ldots,m$, we convert $A_i$ into an equivalent 
\owdfa\ $A_i'$ with $e^{O(\sqrt{n \cdot \ln n})}$ states. We can assume that the state sets of the resulting automata are
  pairwise disjoint.
  
  We define $A_u$ that accepts $\{w \in L(A) \mid \mbox{$w$ is unary}\}$
  consisting of one copy of each of these \owdfas\ and a new state $q_s$, which is its start state.
  In reading the first letter $a_i$ of an input, $A_u$ transits {}from $q_s$ to the state $q$ in the 
  copy of $A'_i$ if $A'_i$ transits {}from its start state to $q$ on $a_i$ (such $q$ is unique because $A'_i$ 
  is deterministic). These transitions {}from $q_s$ do not introduce any nondeterminism because 
  $A'_1, \ldots, A'_m$ are defined over pairwise distinct letters. 
	After thus entering the copy, $A_u$ merely simulates $A'_i$. 
	The start state $q_s$ should be also an accepting state if and only if $\varepsilon \in L(A'_i)$ for some 
	$1 \le i \le m$.  
Being thus built, $A_u$ accepts exactly all the unary words in~$L(A)$ and contains at most 
$m \cdot e^{O(\sqrt{n \cdot \ln n})}+1$ states.
 
On the other hand, for the nonunary part of~$L(A)$, using Theorem~\ref{thm:nonunary_NFA_to_DFA},  we convert~$A_0$
into a Parikh equivalent \owdfa\ $A_n$ with a number of states $r(n)$, polynomial in $n$. 
The standard product construction is applied to $A_u$ and $A_n$ in order to build a \owdfa\ accepting $L(A_u) \cup L(A_n)$. 
The number of states of the \owdfa\ thus obtained is bounded by the product 
$e^{O(\sqrt{n \cdot \ln n})}\cdot r(n) = e^{O(\sqrt{n \cdot \ln n})}\cdot e^{O(\ln n)} = e^{O(\sqrt{n \cdot \ln n}+\ln n)}$, 
which is still bounded by $e^{O(\sqrt{n \cdot \ln n})}$.

Finally, we observe that by Theorem~\ref{thm:Chrobak86}, in the unary case $e^{\Theta(\sqrt{n \cdot \ln n})}$ is the
tight cost of the conversion {}from $n$-state \ownfas\ to \owdfas. This implies that the upper bound we obtained here
cannot be reduced.

This completes the proof of Theorem~\ref{thm:NFA_to_DFA}. 
\end{proof}

We conclude this section with some observations.
We proved that fixed an alphabet~$\Sigma$, the state cost of the conversion of $n$-state
\ownfas\ into Parikh equivalent \owdfas\ is polynomial in~$n$, in the case each word in the accepted
language is nonunary. Otherwise, the cost is exponential in~$\sqrt{n \cdot \ln n}$.
A closer inspection to our proofs shows that these costs are exponential in the size of the
alphabet (see Notes~\ref{noteA}-\ref{noteD}).

While the cost of the conversion in the general case has been proved to be tight (see Theorem~\ref{thm:NFA_to_DFA}),
it should be interesting to see whether or not the  cost for the conversion in the case $n$-state
\ownfas\ accepting only nonunary words (Theorem~\ref{thm:nonunary_NFA_to_DFA}) could be further reduced.
To this respect, we point out that~$n$ is a lower bound. In fact, a smaller cost would imply that any given \ownfa\ (or \owdfa)
$B_0$, could be converted into a smaller Parikh equivalent \owdfa~$B_1$ which, in turn, could be further
converted in a smaller Parikh equivalent \owdfa~$B_2$ an so on.
In this way, from~$B_0$ we could build an arbitrary long sequence of automata $B_0,B_1,B_2,\ldots$,
all of them Parikh equivalent to~$B_0$, and such that for each $i>0$, $B_i$ would be smaller than~$B_{i-1}$.
This clearly does not make sense. With a similar argument, we can also conclude that even the costs of
the conversion of  $n$-state \ownfas\ into a Parikh equivalent \ownfas\ must be at least~$n$.

	\section{From \mbox{\sc\cfg}s to Parikh equivalent \mbox{\sc\owdfa}s}
	\label{sec:CFG_to_DFA}

In this section we extend the results of Section~\ref{sec:NFA_to_DFA} to
the conversion of \cfgs\ in Chomsky normal form into Parikh equivalent \owdfas.
Actually, Theorem~\ref{thm:nonunary_NFA_to_DFA} will play an important role in order
to obtain the main result of this section. The other important ingredient is the
following result proven in~2011 by Esparza, Ganty, Kiefer and Luttenberger~\cite{EsGaKiLu2011}, 
which gives the cost of the conversion of \cnfgs\ into Parikh equivalent \ownfas.

\begin{theorem}[\cite{EsGaKiLu2011}]\label{thm:EGKL2011}
	For each \cnfg\ with $h$ variables, there exists a Parikh equivalent \ownfa\ 
	with~${2h+1\choose h}=O(4^h)$ states. 
\end{theorem}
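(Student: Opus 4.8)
The plan is to build the \ownfa\ directly from $G=(V,\Sigma,P,S)$ by simulating derivations at the level of \emph{multisets of nonterminals}, exploiting that Parikh equivalence lets us forget the left-to-right order in which pending nonterminals are expanded. Concretely, I would take the states to be multisets over $V$ of total size at most $h$ (together with one extra initial/sink marker), start from the singleton $\{S\}$, and use two kinds of transitions: an $\varepsilon$-move $M\cup\{B\}\to M\cup\{C,D\}$ for every production $B\to CD$ (permitted only when the resulting size stays within the bound), and an $a$-reading move $M\cup\{B\}\to M$ for every production $B\to a$; the unique accepting state is the empty multiset. Any accepting run then spells out a word whose Parikh image is, by construction, that of an actual derivation of $G$, which gives \emph{soundness}. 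Counting the states is the routine part: the number of multisets of size at most $h$ over an $(h{+}1)$-element set is $\sum_{k=0}^{h}\binom{h+k}{k}=\binom{2h+1}{h}$ by the hockey-stick identity, and $\binom{2h+1}{h}=O(4^h)$ by Stirling's approximation.

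The substance of the proof is \emph{completeness}: every $w\in L(G)$ must be Parikh equivalent to some word accepted by this bounded-width machine. The obstacle is that an arbitrary derivation tree can force many nonterminals to be simultaneously pending, so that a naive left-to-right schedule would need width proportional to the tree size rather than to $h$. The idea I would pursue is a two-stage reduction of a derivation tree $T$ for $w$. First I would repeatedly \emph{extract pumps}: whenever some nonterminal $B$ repeats on a root-to-leaf path, the nested occurrence exhibits a factorization $B\DerivesG\alpha B\beta$, and I would excise that context to obtain a strictly smaller tree. Iterating leaves a \emph{pump-free core} $T_0$ in which no nonterminal repeats on any path, so every path through internal nodes carries distinct labels from $V$ and $T_0$ has height at most $h$. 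Second, I would schedule the expansion of $T_0$ by a pebbling/Strahler-style traversal that always finishes the subtree of smaller Strahler number first; since $T_0$ has height at most $h$, its Strahler number is at most $h$, so at no moment are more than $h$ nonterminals pending (the extra marker absorbing the off-by-one). This produces a run of the machine on a word $w_0$ with $\psi(w_0)=\psi(\mathrm{yield}(T_0))$.

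It remains to reinsert the extracted pumps, which is exactly what the loop structure of the automaton is for. Each excised context $B\DerivesG\alpha B\beta$ returns, after the same pebbling schedule, to a multiset state containing the $B$ it departed from, so it realizes a \emph{cycle} emitting a word Parikh equivalent to $\alpha\beta$; the width stays bounded because the pump reintroduces the very $B$ it consumed and its side contexts are themselves pump-free of height at most $h$. Traversing the core run and inserting, at the appropriate state, the cycle for each pump once yields an accepting run whose emitted word is Parikh equivalent to $w$, since $\psi(w)$ splits as the core yield plus the yields of the extracted pumps. A final bookkeeping step eliminates the $\varepsilon$-moves by the standard $\varepsilon$-closure, which does not increase the state count and leaves the $\binom{2h+1}{h}$ bound intact. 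I expect the genuinely delicate point to be the simultaneous management of nested pumps: verifying that the extract-and-reinsert process can be carried out so that \emph{every} pump attaches to a state the core run actually visits, and that realizing a pump cycle never pushes the pending multiset above the allowed bound.
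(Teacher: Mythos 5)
Your automaton is precisely the construction of Esparza, Ganty, Kiefer and Luttenberger that the paper invokes here without proof: states are multisets over $V$ of bounded cardinality, $\varepsilon$-moves $M\cup\{B\}\to M\cup\{C,D\}$ for $B\to CD$, letter-reading moves for $B\to a$, acceptance at the empty multiset. Your soundness argument is fine, and so is the count (modulo a bookkeeping slip: what you actually need, to match your own Strahler bound, are multisets of size at most $h+1$ over the $h$ variables, whose number is $\binom{2h+1}{h+1}=\binom{2h+1}{h}$, rather than size-at-most-$h$ multisets over an $(h+1)$-element set; the two counts happen to coincide). The $\varepsilon$-elimination remark is also harmless. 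The problem is completeness, and it sits exactly where you flag it: the pump-reinsertion step is not a delicate verification left to the reader, it is the entire content of the theorem, and as sketched it fails.

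Concretely, the width accounting does not close. When the core run visits a state $M$ with $B\in M$, executing the cycle for a pump $B\Rightarrow^\ast \alpha B\beta$ costs, on top of the $|M|-1$ variables still pending, the working width of the pump's own side trees; these, being pump-free of height up to $h$, can themselves require width close to $h$. Moreover, nothing guarantees that $B$ is ever pending in a \emph{small} state: $B$ may occur in the core only at large depth, where the Strahler schedule already has close to $h$ variables pending. So a single insertion can force roughly $2h$ pending variables, overflowing the state space, and nested pumps compound this, since they must be inserted inside other pumps' cycles by a recursive scheme you have not specified. Your justification (``the pump reintroduces the very $B$ it consumed and its side contexts are pump-free'') bounds the pump in isolation but says nothing about the sum of the two contributions. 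The cited proof avoids precisely this additive double counting: instead of extracting to a height-bounded core and reinserting, it proceeds by induction on the number of \emph{distinct} variables occurring in the derivation tree, proving that a tree using $j$ distinct variables can be restructured, Parikh-equivalently, into a run within multisets of size $j+1$; since the path down to $B$ and the pump hanging at $B$ draw on the same pool of at most $h$ distinct variables, this accounting never adds the two costs. To repair your argument you would need a shared-resource invariant of this kind; height-plus-Strahler bookkeeping for the core, together with pump-freeness of the extracted contexts, does not suffice.
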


We point out that the upper bound in Theorem~\ref{thm:EGKL2011} does not depend on
the cardinality of the input alphabet.

By combining Theorem~\ref{thm:EGKL2011} with the main result of the previous section,
i.e., Theorem~\ref{thm:NFA_to_DFA}, we can immediately obtain a double exponential
upper bound in~$h$ for the size of \owdfas\ Parikh equivalent to \cnfgs\ with $h$ variables.
However, we can do better. In fact, we show how to reduce the upper bound to a single exponential 
in a polynomial of $h$. We obtain this result by proceeding as in the case of finite automata:
we split the language defined by given grammar into the unary and nonunary parts, we make separate conversions, 
and finally we combine the results.

\smallskip

As in Section~\ref{sec:NFA_to_DFA}, from now on let us fix the alphabet $\Sigma=\{a_1, a_2, \ldots, a_m\}$.
We start by considering the nonunary part.
By combining Theorem~\ref{thm:EGKL2011} with Theorem~\ref{thm:nonunary_NFA_to_DFA} we obtain:

\begin{theorem}\label{thm:nonunary_CFG_to_DFA}
	For each $h$-variable \cnfg\ with terminal alphabet~$\Sigma$, generating a language none of whose words are unary, 
	there exists a Parikh equivalent \owdfa\ with~$2^{O(h)}$ states. 
\end{theorem}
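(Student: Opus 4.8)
The plan is to obtain the result by composing the two conversions already at our disposal: the grammar-to-\ownfa\ conversion of Theorem~\ref{thm:EGKL2011} and the nonunary \ownfa-to-\owdfa\ conversion of Theorem~\ref{thm:nonunary_NFA_to_DFA}. Starting {}from the given $h$-variable \cnfg~$G$, I would first apply Theorem~\ref{thm:EGKL2011} to build a Parikh equivalent \ownfa~$A$ with $O(4^h)$ states, so that $\psi(L(A))=\psi(L(G))$.

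The crucial point is that $A$ accepts only nonunary words, which is exactly what lets us invoke the \emph{polynomial} conversion rather than the general superpolynomial one. Since no word of $L(G)$ is unary, every vector in $\psi(L(G))$ is a nonunary vector; because $\psi(L(A))=\psi(L(G))$, the same holds for $\psi(L(A))$. Recalling that a word $w$ is unary if and only if its Parikh image $\psi(w)$ is a unary vector, it follows that every $w\in L(A)$ is nonunary. Hence $A$ satisfies the hypothesis of Theorem~\ref{thm:nonunary_NFA_to_DFA}.

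I would then apply Theorem~\ref{thm:nonunary_NFA_to_DFA} to $A$, which has $n=O(4^h)$ states, obtaining a Parikh equivalent \owdfa~$A'$ whose number of states is polynomial in~$n$. Because the terminal alphabet~$\Sigma$ (and hence its size~$m$) is fixed, the degree of this polynomial is a constant depending only on~$m$, so $A'$ has $(4^h)^{O(1)}=2^{O(h)}$ states. By transitivity of Parikh equivalence we get $\psi(L(A'))=\psi(L(A))=\psi(L(G))$, whence $A'$ is Parikh equivalent to~$G$ and has $2^{O(h)}$ states, as claimed.

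Since the argument is essentially a composition of known results, there is no deep obstacle; the single step deserving care is the inference that $A$ accepts no unary word, which rests precisely on the equivalence between unary words and unary Parikh vectors and is what keeps us inside the polynomial regime of Theorem~\ref{thm:nonunary_NFA_to_DFA}. It is also worth noting, as already for that theorem, that the constant hidden in~$2^{O(h)}$ grows exponentially with the alphabet size~$m$.
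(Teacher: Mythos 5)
Your proposal is correct and follows exactly the paper's own proof: apply Theorem~\ref{thm:EGKL2011} to get a Parikh equivalent \ownfa\ with $O(4^h)$ states, then apply Theorem~\ref{thm:nonunary_NFA_to_DFA} to obtain a \owdfa\ with polynomially many states in $4^h$, i.e., $2^{O(h)}$. Your explicit verification that the intermediate \ownfa\ accepts only nonunary words (via the correspondence between unary words and unary Parikh vectors) is a detail the paper leaves implicit, and it is argued correctly.
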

\begin{proof}
  First, according to Theorem~\ref{thm:EGKL2011}, we can transform the grammar into a Parikh equivalent
  \ownfa\ with~$O(4^h)$ states. Then, using Theorem~\ref{thm:nonunary_NFA_to_DFA}, we convert the resulting automaton
  into a Parikh equivalent \owdfa, with a number of states polynomial in $4^h=2^{2h}$, hence exponential
  in~$h$.
\end{proof}

Now, we switch to the general case.

\begin{theorem}\label{thm:CNFG_to_DFA}
	For each $h$-variable \cnfg\ with terminal alphabet~$\Sigma$, there exists a Parikh 
	equivalent \owdfa\ with at most~$2^{O(h^2)}$ states. 
\end{theorem}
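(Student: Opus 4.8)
The plan is to follow exactly the decomposition strategy used for \ownfas\ in Theorem~\ref{thm:NFA_to_DFA}, but now applied at the level of grammars, invoking Lemma~\ref{lemma:decCFL} in place of Lemma~\ref{lemma:decREG}. First I would apply Lemma~\ref{lemma:decCFL} to the given $h$-variable \cnfg~$G$ to obtain $m+1$ grammars $G_0, G_1, \ldots, G_m$, where $G_0$ has $mh-m+1$ variables and generates the nonunary part $L_0$ of $L(G)$, while each $G_i$ (for $1 \le i \le m$) is a unary \cnfg\ with $h$ variables generating the unary part $L_i = L(G) \cap \{a_i\}^*$. Since $\psi(L(G)) = \bigcup_{i=0}^{m} \psi(L_i)$, it suffices to build a \owdfa\ for each piece, Parikh equivalent to the corresponding $L_i$, and then combine them with the standard product construction for union, which multiplies state counts.

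For the nonunary part, I would feed $G_0$ into Theorem~\ref{thm:nonunary_CFG_to_DFA}: since $G_0$ generates only nonunary words and has $mh-m+1 = O(h)$ variables (with $m$ fixed), this yields a Parikh equivalent \owdfa\ with $2^{O(h)}$ states. For each unary part, $G_i$ is a unary \cnfg\ with $h$ variables, so by Theorem~\ref{thm:PSW2002} there is an equivalent \owdfa\ with fewer than $2^{h^2}$ states; note that for unary languages Parikh equivalence coincides with ordinary equivalence, so this \owdfa\ is automatically Parikh equivalent to $L_i$. This is the step that forces the bound up from $2^{O(h)}$ to $2^{O(h^2)}$: the unary part, not the nonunary part, is the expensive one, exactly mirroring the phenomenon observed for automata in Section~\ref{sec:NFA_to_DFA}.

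Finally I would combine the $m+1$ resulting \owdfas\ using the product construction for union described in Section~\ref{sec:constructions}. The total state count is bounded by the product of the individual counts, namely
\[
2^{O(h)} \cdot \left(2^{h^2}\right)^m = 2^{O(h)} \cdot 2^{m h^2} = 2^{O(h^2)}\,,
\]
since $m$ is a fixed constant (the alphabet $\Sigma$ is fixed). This gives a single \owdfa\ Parikh equivalent to $G$ with $2^{O(h^2)}$ states, as claimed.

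The main obstacle, and the only place where real care is needed, is bookkeeping the interplay between the alphabet size $m$ and the variable count $h$. The decomposition in Lemma~\ref{lemma:decCFL} introduces an $m$-fold blowup in the number of variables of $G_0$ and an $m$-fold product in the union, and Theorem~\ref{thm:nonunary_CFG_to_DFA} itself hides an alphabet-dependent polynomial (cf.\ Notes~\ref{noteA}--\ref{noteD}). Since $\Sigma$ is fixed throughout this section, all these factors collapse into the constant absorbed by the $O$-notation, but one should verify that the dominant contribution is the $2^{m h^2}$ term coming from the $m$ unary pieces, so that the exponent is quadratic in $h$ rather than higher. Tightness of the $2^{O(h^2)}$ bound then follows from the matching unary lower bound in Theorem~\ref{thm:PSW2002}, as already remarked in the introduction.
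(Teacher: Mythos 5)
Your proposal is correct and takes essentially the same approach as the paper: decompose $G$ via Lemma~\ref{lemma:decCFL}, convert $G_0$ with Theorem~\ref{thm:nonunary_CFG_to_DFA} and the unary grammars $G_1,\ldots,G_m$ with Theorem~\ref{thm:PSW2002} (noting that Parikh equivalence is plain equivalence for unary languages), and combine the results. The only difference is one of bookkeeping: instead of your single $(m+1)$-fold product giving $2^{O(h)}\cdot 2^{mh^2}$ states, the paper first merges the $m$ unary \owdfas\ into one automaton $A_{\rm unary}$ with at most $m\cdot 2^{h^2}$ states (a \emph{sum} rather than a product, exploiting their pairwise disjoint one-letter alphabets via a new common start state) and then takes one product with $A_{\rm non}$, which gives a better dependence on $m$ but the same $2^{O(h^2)}$ bound for fixed~$\Sigma$.
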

\begin{proof}
Let us denote the given \cnfg\ by $G = (V, \Sigma, P, S)$, where $|V| = h$. 

In the case $m = 1$ (unary alphabet), one can employ Theorem~\ref{thm:PSW2002} (note that, over a unary alphabet, two languages $L_1, L_2$ are Parikh equivalent if and only if they are equivalent). 
Hence, {}from now on we assume $m\geq 2$. 

Let us give an outline of our construction first: 
\begin{enumerate}
\item\label{step:split}
	{}From $G$, we first create \cnfgs\ $G_0, G_1, \ldots, G_m$ such that $G_0$ generates the nonunary part of $L(G)$ and
	$G_1,G_2,\ldots,G_m$ generate the unary parts.
\item\label{step:unary}	
	The grammars $G_1, G_2, \ldots, G_m$ are converted into respectively equivalent unary \owdfas\ $A_1, A_2, \ldots, A_m$.
	{}From these \owdfas, a \owdfa\ $A_{\rm unary}$ accepting the set of all unary
	words in $L(G)$ is constructed. 
\item\label{step:CNFG_to_DFA}
	The grammar $G_0$ is converted into a Parikh equivalent \owdfa~$A_{\rm non}$.
\item\label{step:last_merge}
	Finally, {}from $A_{\rm unary}$ and $A_{\rm non}$, a \owdfa\ that accepts the union of $L(A_{\rm unary})$ and $L(A_{\rm non})$
	is obtained. 
\end{enumerate}
Observe that $L(A_{\rm unary}) = \{w \in L(G) \mid \mbox{$w$ is unary}\}$ and $L(A_{\rm non})$ is Parikh equivalent to $L(G_0) = \{w \in L(G) \mid \mbox{$w$ is not unary}\}$. 
Thus, the \owdfa\ which is finally constructed by this procedure is Parikh equivalent to the given grammar~$G$.

We already have all the tools we need to implement each step in the above construction.
\begin{itemize}  
\item[\ref{step:split}.]
  We can obtain grammars $G_0, G_1, \ldots, G_m$ according to Lemma~\ref{lemma:decCFL}.
  In particular, $G_0$ has $mh-m+1$ variables, while each of the remaining grammars has $h$ variables.
\item[\ref{step:unary}.]
  According to Theorem~\ref{thm:PSW2002}, for $i=1,\ldots,m$, grammar $G_i$ is converted into a \owdfa\ $A_i$
  with less than $2^{h^2}$ states. Using the same strategy presented in the proof of Theorem~\ref{thm:NFA_to_DFA},
  {}from $A_1,\ldots,A_m$, we define $A_{\rm unary}$ consisting of one copy of each of these \owdfas\ and a 
  new state $q_s$, which is its start state. Hence, the number of states of $A_{\rm unary}$ does not
  exceed $m 2^{h^2}$. 
\item[\ref{step:CNFG_to_DFA}.]
  This step is done using Theorem~\ref{thm:nonunary_CFG_to_DFA}. The number of the states of the resulting
  \owdfa~$A_{\rm non}$ is exponential in the number of the variables of
  the grammar $G_0$ and, hence, exponential in $h$.
\item[\ref{step:last_merge}.]
  The final \owdfa\ can be obtained as the product of two automata $A_{\rm unary}$ and $A_{\rm non}$.
  Considering the bounds obtained in Step~\ref{step:unary} and~\ref{step:CNFG_to_DFA} we conclude that
  the number of states in exponential in $h^2$.\footnote{%
  We briefly discuss how the upper bounds depends on~$m$, the alphabet size.
  Using the estimation of the cost of the conversions in Theorem~\ref{thm:nonunary_NFA_to_DFA}
  (see Note~\ref{noteD}) and
  observing that the grammar~$G_0$ in the previous construction has less than $mh$ variables,
  we can conclude that the automaton~$A_{\rm non}$ has~$2^{O(hm^4)}$ states.
  Hence, the number of the states of the \owdfa\ finally obtained by the construction given in the proof
  of Theorem~\ref{thm:CNFG_to_DFA} is~$2^{O(h^2+hm^4)}$.\label{noteE}
}%
\qedhere
\end{itemize}
\end{proof}

\noindent
We point out that in~\cite{LavadoPighizzini2012} it has been proved a result close to Theorem~\ref{thm:CNFG_to_DFA}
in the case of \cnfgs\ generating \emph{letter bounded languages}, i.e., subsets of $a_1^*a_2^*\cdots a_m^*$.
In particular, an upper bound exponential in a polynomial in $h$ has been obtained.
However, the degree of the polynomial is, in turn, a polynomial in the size $m$ of the alphabet.
Here, in  our Theorem~\ref{thm:CNFG_to_DFA}, the degree is~$2$. Hence, it does not depend on~$m$.

\smallskip
We observe that in~\cite[Thm.\ 7]{PighizziniShallitWang2002} it was proved that there is a constant
$c>0$ such that for infinitely many $h>0$ there exists a \cnfg\ with $h$ variables generating a unary
language such that each equivalent \owdfa\ requires at least $2^{ch^2}$ states. This implies that the upper bound
in Theorem~\ref{thm:CNFG_to_DFA} cannot be improved.

\smallskip

Even the bound given in Theorem~\ref{thm:nonunary_CFG_to_DFA}, for languages consisting only
of nonunary words, cannot be improved, by replacing the exponential in~$h$ by a slowly increasing
function. 
This can be shown by adapting a standard argument from the unary case 
(e.g.,~\cite[Thm.~5]{PighizziniShallitWang2002}).
For any integer $h\ge 3$, consider the grammar~$G$ with variables $A,B,A_0,A_1,\ldots,A_{h-3}$,
and productions
\[
A\rightarrow a\,,~ B\rightarrow b\,,~ A_0\rightarrow AB\,,~ A_j\rightarrow A_{j-1}A_{j-1}\,,\mbox { for $j=1,...,h-3$}
\,.
\]
As easy induction shows that, for $j=1,\ldots,h-3$, the only word which is generated from~$A_j$
is $(ab)^{2^j}$. 
Hence, by choosing $A_{h-3}$ as start symbol, we have~$L(G)=\{(ab)^H\}$, with~$H=2^{h-3}$. 
An immediate pumping argument shows that each \owdfa\ (or even \ownfa) with
less than~$2H+1$ states accepting a word of length~$2H$, should also  accept some
words of length $<2H$. Since $L(G)$ contains only the word~$(ab)^H$, it turns out that each 
\owdfa\ accepting a language Parikh equivalent to $L(G)$ requires
$2H+1$ states, namely a number exponential in~$h$.

	\section{Conversions into Parikh equivalent \mbox{\sc\twdfa}s}
	\label{sec:to_2DFA}

In this section we study the conversions of \ownfas\ and \cnfgs\ into Parikh equivalent
\emph{two-way} deterministic automata.
In the previous sections, for the conversions into \emph{one-way} deterministic automata,
we observed that the unary parts are the most expensive. However, the cost
of the conversions of unary \ownfas\ and \cnfgs\ into \twdfas\ are smaller than the costs for
the corresponding conversions into \owdfas. 
This allows us to prove that, in the general case, the cost of the conversions
of  \ownfas\ and \cnfgs\ into Parikh equivalent \twdfas\ are smaller
than the cost of the corresponding conversions into \owdfas.

Let us start by presenting the following result, which derives from~\cite{Chrobak1986}:

\begin{theorem}\label{thm:u1NFAto2DFA}
  For each $n$-state unary \ownfa\ there exists an equivalent halting \twdfa\ with $n^2+1$ states.
\end{theorem}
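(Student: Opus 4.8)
The plan is to leverage the structure of unary nondeterministic automata, namely Chrobak's normal form, to simulate the nondeterministic choice deterministically by using the two-way motion of the head to try each branch in sequence. Recall that a unary \ownfa\ can be put into a normal form consisting of an initial deterministic ``tail'' followed by a nondeterministic choice among several disjoint deterministic cycles. The acceptance of an input $a^\ell$ then amounts to: after reading the tail, guess one of the cycles and check whether $\ell$ places the computation in an accepting state of that cycle, i.e.\ whether $\ell$ (adjusted by the tail length) is congruent to an accepting residue modulo the length of the chosen cycle. A \twdfa\ can avoid the guess entirely by checking the cycles one after another, sweeping the input left-to-right for each candidate and returning to the left endmarker between attempts.

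First I would invoke Chrobak's normal form to write the given $n$-state unary \ownfa\ as a tail of some length together with disjoint cycles whose lengths sum to at most $n$ (and whose total size, tail plus cycles, is bounded by $n$). The key counting fact is that the total number of states across all cycles plus the tail is at most $n$, so the residue information I must store is ``local'' to whichever cycle I am currently testing. The \twdfa\ I construct will, for each cycle in turn, move its head across the whole input while cycling through that cycle's states (a single pointer walking around one cycle), and simultaneously it must account for the tail. When the head reaches the right endmarker, the current state of the cycle-pointer tells us whether $a^\ell$ is accepted along that branch; if yes, the machine enters $q_f$ and halts, and if no, it rewinds the head to the left endmarker and begins testing the next cycle.

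The state budget is where the $n^2+1$ bound comes from, and getting it tight is the main bookkeeping point. A naive sequential simulation would use a separate copy of each cycle plus a separate ``rewind'' phase, which could be wasteful; the clean way is to let the machine carry a pair consisting of (an index identifying which cycle is currently being tested) together with (the current state within that cycle), so that one uniform set of transitions handles the forward sweep for every cycle, and the rewind is folded into the same states by moving the head leftward until the endmarker is seen. Since the cycle lengths sum to at most $n$ and there are at most $n$ cycles, the number of (cycle, state-in-cycle) pairs is bounded by a product that I would argue is at most $n^2$, and adding the single halting accepting state $q_f$ yields $n^2+1$. I would handle the tail either by a preprocessing deterministic scan over the first few symbols (absorbing it into the initial index-$0$ phase) or by shifting each cycle's accepting residues by the tail length, whichever keeps the count within $n^2$.

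The hard part will be verifying that the halting requirement is met and that the head never violates the endmarkers while rewinding, and simultaneously that the state count does not exceed $n^2+1$ once the tail and the rewind phases are accounted for. In particular I must ensure that testing the last cycle does not need a dedicated rewind (since a failed final test simply rejects at the right endmarker by having no transition, and the automaton halts there), so that no extra ``dead'' states are spent, and that the transitions reading the endmarkers respect the convention (no leftward move on the left endmarker, no rightward move on the right endmarker) stated on p.~\pageref{p:twdfa}. Once these invariants are in place, correctness follows because $a^\ell \in L$ of the original \ownfa\ if and only if some cycle accepts $\ell$, which is exactly the disjunction the \twdfa\ evaluates sequentially, and the construction is halting because each phase performs one bounded forward sweep and one bounded rewind.
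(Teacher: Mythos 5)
Your overall strategy---put the \ownfa\ into Chrobak normal form, then let the \twdfa\ test the cycles one at a time with full sweeps of the unary input---is the same as the paper's, but your state accounting rests on a false premise that breaks the $n^2+1$ bound. In Chrobak normal form only the \emph{cycles} have total size bounded by $n$; the initial deterministic tail is \emph{not} included in that bound, and in general it needs up to $\Theta(n^2)$ states (Chrobak's construction gives a tail of length $s=O(n^2)$; this was later sharpened to $s\le n^2-2$ by Geffert and to $s\le n^2-n$ by Gawrychowski, while the cycles satisfy $r\le n-1$). So your claim that ``tail plus cycles is bounded by $n$'' is wrong, and your fallback of ``shifting each cycle's accepting residues by the tail length'' cannot rescue it: an input of length $\ell<s$ never enters any cycle, so its acceptance is determined by a tail state and the \twdfa\ must simulate the tail outright, spending roughly $s$ states on it. The paper's budget is $s+r+2\le(n^2-n)+(n-1)+2=n^2+1$ ($s$ tail states, one state to walk to the right endmarker, $r$ cycle states, and the halting accepting state $q_f$), and the exact constant hinges on invoking those refined bounds on the normal form; without them you only recover Chrobak's original $O(n^2)$, which is precisely the weaker statement this theorem improves. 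Relatedly, your ``(cycle index, state-in-cycle) pairs'' count is confused: the cycles are disjoint, so the state already determines the cycle and there are only $r\le n-1$ such pairs---the quadratic term in the bound comes from the tail, not from any product over cycles.

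A second concrete problem is your rewind phase. Returning to the left endmarker between tests requires the machine to remember which cycle is next, costing extra states (up to one per cycle, or reuse that inflates the count), and your proposal gives no accounting that keeps the total at $n^2+1$. The paper sidesteps rewinds entirely by exploiting unariness: it simulates cycle $1$ \emph{right-to-left} starting from the right endmarker, cycle $2$ left-to-right, and so on in alternating sweeps, which is sound because on a unary input a traversal in either direction reads the same word. With that trick the cycle simulation uses exactly the $r$ cycle states and nothing more, the machine is trivially halting (at most $k+1$ bounded sweeps), and no dedicated rewind or dead states are ever needed. To repair your proof you would need to (i) correct the normal-form size claims and cite the $s\le n^2-n$, $r\le n-1$ refinements, (ii) handle short inputs ($\ell<s$) on the tail itself, and (iii) either adopt the alternating-sweep simulation or show your rewind bookkeeping fits inside the same $s+r+2$ budget.
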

\begin{proof}
  For the sake of completeness, we present a proof which is essentially the same
  given by Chrobak~\cite[Thm.~6.2]{Chrobak1986} where, however,	the obtained upper bound was~$O(n^2)$. 
  Then we will explain why the big-$O$ in the upper bound can be removed.
	
	First of all, each $n$-state unary \ownfa~$A$ can be converted into an equivalent
	$\ownfa$~$A_c$ in a special form, which is known as \emph{Chrobak normal form}~\cite[Lemma~4.3]{Chrobak1986}, 
	consisting of a deterministic path which starts from the initial state,
	and $k\geq 0$ disjoint deterministic cycles. The number of states in the path is~$s=O(n^2)$,
	while the \emph{total} number of states in the cycles is $r\leq n$.
	From the last state of the path there are $k$ outgoing edges, each one of them reaching
	a fixed state on a different cycle. Hence, on each input of length $\ell\geq s$, the computation visits
	all the states on the initial path, until the last one where the \emph{only} nondeterministic
	choice is taken, moving to one of the cycles, where the remaining part of the input
	is examined. However, if $\ell<s$ then computation ends in a state on the initial path, without reaching
	any loop. (In the special case $k=0$ the accepted language is finite.)
	
	A \twdfa~$B$ can simulate the $\ownfa$~$A_c$ in Chrobak normal form, traversing the input
	word at most $k+1$ times. In the first traversal, the automaton checks whether or not the input
	length is $<s$. If this is the case, then the automaton accepts or rejects according to the
	corresponding state on the initial path of~$A_c$. Otherwise, it moves to the right endmarker.
	This part can be implemented with $s+1$ states ($s$ states for the simulation of the initial path,
	plus one more state to move to the right endmarker).
	From the right endmarker, the automaton traverses the input leftward, by simulating the first
	cycle of $A_c$ from a suitable state (which is fixed, only depending on $s$ and on the cycle length).
	If the left endmarker is reached in a state which simulates a final state in the cycle then the
	automaton~$B$ moves to the final state $q_f$ and accepts, otherwise it
	traverses the input  rightward, simulating the $2$nd cycle of~$A_c$, and so on.
	Hence, in the $(i+1)$th traversal of the input, $1\leq i\leq k$, the $i$th cycle is simulated.
	So the number of states used to simulate the cycles is equal to the total number of
	states in the cycles, namely~$r$.
	Considering the final state $q_f$, we conclude that $B$ can be implemented with $s+r+2=O(n^2)$ states.
	
	Finally, we point out that finer estimations for the number of the states on the initial path and in the
	loops of~$A_c$ have been found. In~\cite{Geffert2007}, it was proved that the number of $s$ of the states
	in the initial path is bounded by $n^2-2$ and the sum $r$ of the numbers of the states in the cycles 
	is bounded by $n-1$.\footnote{%
	Actually, there is an exception: if the given \ownfa~$A$ is just one cycle of $n$ states then~$A$ is
	already a \owdfa. If it is minimal, then in any equivalent \ownfa\ we cannot have a cycle with less than~$n$ states
	which is useful to accept some input.
	However, in this degenerate case, Theorem~\ref{thm:u1NFAto2DFA} is trivially true, without making use of the Chrobak
	normal form.%
	}
	The first bound has been further reduced in~\cite{Gawrychowski2011}
	to $s\leq n^2-n$. This allows us to conclude that the \twdfa~$B$ can be obtained
	with at most $n^2+1$ states.
\end{proof}

The upper bound given in Theorem~\ref{thm:u1NFAto2DFA} is asymptotically tight. As proven in~\cite[Thm.~6.3]{Chrobak1986},
for each integer $n$ there exists an $n$-state unary \ownfa\ such that any equivalent \twdfa\ requires
$\Omega(n^2)$ states.

\medskip

By combining Theorem~\ref{thm:u1NFAto2DFA} with the bound for the transformation of
unary \cnfgs\ into \ownfas\ given in Theorem~\ref{thm:PSW2002}, we immediately obtain the
following bound.

\begin{theorem}\label{thm:uCFGto2DFA}
  For each  $h$-variable unary \cnfg\ there exists an equivalent halting \twdfa\ with 
  at most~$(2^{2h-1}+1)^2+1$ states.
\end{theorem}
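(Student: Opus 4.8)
The plan is simple: chain together two results already established in the paper. Theorem~\ref{thm:PSW2002} gives that any $h$-variable unary \cnfg\ generating a unary language has an equivalent \ownfa\ with at most $2^{2h-1}+1$ states. Then Theorem~\ref{thm:u1NFAto2DFA} converts any $N$-state unary \ownfa\ into an equivalent halting \twdfa\ with $N^2+1$ states. Composing these, I would set $N = 2^{2h-1}+1$ and substitute into the quadratic bound to obtain $(2^{2h-1}+1)^2+1$ states, which is exactly the claimed upper bound.

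First I would invoke Theorem~\ref{thm:PSW2002}. The given \cnfg\ $G$ has $h$ variables and is unary, so its terminal alphabet is a single letter and $L(G)$ is a unary language; hence the hypotheses of Theorem~\ref{thm:PSW2002} are met and we obtain an equivalent unary \ownfa\ $A$ with at most $2^{2h-1}+1$ states. Next I would apply Theorem~\ref{thm:u1NFAto2DFA} to $A$: since $A$ is a unary \ownfa\ with $N \le 2^{2h-1}+1$ states, there is an equivalent halting \twdfa\ $B$ with at most $N^2+1 \le (2^{2h-1}+1)^2+1$ states. Because both conversions preserve the accepted language exactly (these are genuine equivalences, not merely Parikh equivalences, which is automatic anyway in the unary case), $B$ is equivalent to $G$, completing the argument.

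There is essentially no obstacle here; this is a one-line corollary of the two preceding theorems, and the statement is flagged in the text as something ``we immediately obtain.'' The only points worth a sentence of care are that the state count is monotone in $N$ (so that the bound $N \le 2^{2h-1}+1$ transfers cleanly through the $N^2+1$ formula), and that the halting property is inherited directly from Theorem~\ref{thm:u1NFAto2DFA}, which already produces a halting \twdfa. I would therefore keep the proof to a couple of sentences, simply naming the two theorems applied in sequence and recording the resulting state bound.
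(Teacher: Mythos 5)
Your proposal is correct and is exactly the paper's argument: the paper derives Theorem~\ref{thm:uCFGto2DFA} precisely by composing Theorem~\ref{thm:PSW2002} (unary \cnfg\ to an equivalent \ownfa\ with at most $2^{2h-1}+1$ states) with Theorem~\ref{thm:u1NFAto2DFA} (unary $n$-state \ownfa\ to an equivalent halting \twdfa\ with $n^2+1$ states), stating that the bound is ``immediately'' obtained. Your added remarks on monotonicity of the bound and inheritance of the halting property are sound and require no further justification.
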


We now have the tools for studying the conversions of \ownfas\ and \cfgs\ into Parikh equivalent \twdfas.
Let us start with the first conversion.

\begin{theorem}\label{thm:1NFAto2DFA}
  For each $n$-state \ownfa\ there exists a Parikh equivalent \twdfa\ with a number of states polynomial in $n$.
\end{theorem}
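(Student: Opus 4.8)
The plan is to decompose the language accepted by the given \ownfa~$A$ into its unary and nonunary parts and handle each separately, exactly as in the proof of Theorem~\ref{thm:NFA_to_DFA}, but exploiting the cheaper unary conversion into \twdfas\ provided by Theorem~\ref{thm:u1NFAto2DFA}. First I would apply Lemma~\ref{lemma:decREG} to obtain {}from the $n$-state \ownfa~$A$ a \ownfa~$A_0$ with $n(m+1)+1$ states accepting the nonunary part of $L(A)$, together with $m$ unary \ownfas~$A_1,\ldots,A_m$, each with $n$ states, accepting the unary parts $L(A)\cap\{a_i\}^*$.

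For the nonunary part, I would invoke Theorem~\ref{thm:nonunary_NFA_to_DFA} to convert~$A_0$ into a Parikh equivalent \owdfa~$A_{\rm non}$ whose number of states is polynomial in the number of states of~$A_0$, and hence polynomial in~$n$ (the alphabet~$\Sigma$ being fixed). For the unary parts, instead of paying the $e^{O(\sqrt{n\cdot\ln n})}$ cost of Chrobak's \owdfa\ conversion, I would apply Theorem~\ref{thm:u1NFAto2DFA} to each unary \ownfa~$A_i$, obtaining an equivalent halting \twdfa~$B_i$ with at most $n^2+1$ states. This is the crucial gain: the unary part, which was the expensive component in the one-way setting, now costs only polynomially many states.

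The final step is to combine these automata into a single \twdfa. The $m$ halting \twdfas~$B_1,\ldots,B_m$ can be combined into a single \twdfa~$B_{\rm unary}$ accepting $\bigcup_{i=1}^m L(B_i)$, the set of all unary words in~$L(A)$, using the sequential-simulation construction for halting \twdfas\ described in Section~\ref{sec:constructions}, whose state cost is the \emph{sum} (up to the constant factors for halting) of the component sizes, hence polynomial in~$n$. Since $A_{\rm non}$ is a \owdfa, it can be regarded as a halting \twdfa\ (as noted on p.~\pageref{p:twdfa}), so the same union construction applied to $B_{\rm unary}$ and $A_{\rm non}$ yields a single halting \twdfa\ whose number of states is the sum of the two, again polynomial in~$n$. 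This automaton is Parikh equivalent to~$A$, since it accepts exactly the unary words of $L(A)$ together with a language Parikh equivalent to the nonunary part.

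The main obstacle I anticipate is bookkeeping rather than conceptual: one must ensure that the union construction for \twdfas\ is legitimately applicable, which requires all component automata to be halting and to respect the endmarker conventions of Section~\ref{sec:constructions}. The \twdfas~$B_i$ from Theorem~\ref{thm:u1NFAto2DFA} are already halting, and the one-way $A_{\rm non}$ is trivially halting once viewed as a \twdfa, so the sequential union applies directly and the additive (rather than multiplicative) state cost is what keeps the total polynomial. A minor point to verify is the correct handling of~$\varepsilon$ and of inputs over which only one of the $B_i$ should accept, but since the unary automata are defined over pairwise distinct single letters, no interference arises.
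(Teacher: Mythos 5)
Your proposal is correct and follows essentially the same route as the paper's proof: decompose $L(A)$ via Lemma~\ref{lemma:decREG}, apply Theorem~\ref{thm:u1NFAto2DFA} to the unary parts and Theorem~\ref{thm:nonunary_NFA_to_DFA} to the nonunary part, and combine everything with the additive sequential-union construction for halting \twdfas\ from Section~\ref{sec:constructions}. The only cosmetic difference is that you union the unary \twdfas\ first and then add $A_{\rm non}$, whereas the paper unions all $m+1$ automata in one step; your attention to the halting hypothesis and to viewing the \owdfa\ as a halting \twdfa\ matches the paper's treatment.
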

\begin{proof}
  We use the same technique as in the proof of Theorem~\ref{thm:NFA_to_DFA}, by splitting the language accepted
  by the given \ownfa\ $A$ into its unary and nonunary parts, as explained in Lemma~\ref{lemma:decREG}. 
  Each unary part is accepted by a \ownfa\ with 
  $n$ states. According to Theorem~\ref{thm:u1NFAto2DFA}, this gives us $m$ \twdfas\ $B_1,B_2,\ldots,B_m$,
  accepting the unary parts, each one
  of them has at most~$n^2+1$ states, where $m$ is the cardinality of the input alphabet
  $\Sigma=\{a_1,a_2,\ldots, a_m\}$.
    
  For the nonunary part we have a \ownfa\ with $n(m+1)+1$ states and, according to Theorem~\ref{thm:nonunary_NFA_to_DFA},
  a Parikh equivalent \owdfa\ $B_0$ with a number of states polynomial in~$n(m+1)+1$ and, hence, in~$n$.
  
  Finally, as explained in Section~\ref{sec:constructions}, we can build a \twdfa\ $B$ such that 
  $L(B)=L(B_0)\cup L(B_1)\cup\cdots\cup L(B_m)$.
  Hence, $B$ is Parikh equivalent to the given \ownfa~$A$ and its number of states is polynomial in~$n$.\footnote{%
  By making the same considerations as in Notes~\ref{noteC} and~\ref{noteD}, we can obtain an~$O(m^3)$
  bound for the degree of the polynomial.%
}
\end{proof}

Now, we consider the conversion of~\cfgs.

\begin{theorem}\label{thm:CFGto2DFA}
  For each $h$-variable \cnfg\ there exists a Parikh equivalent \twdfa\ with $2^{O(h)}$ states.
\end{theorem}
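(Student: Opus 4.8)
The plan is to follow exactly the same decomposition-and-recombination strategy used in Theorem~\ref{thm:1NFAto2DFA}, but replacing each ingredient by its grammar-based counterpart. First I would invoke Lemma~\ref{lemma:decCFL} to split the given $h$-variable \cnfg~$G$ into \cnfgs\ $G_0,G_1,\ldots,G_m$, where $G_0$ generates the nonunary part of $L(G)$ (with $mh-m+1$ variables) and each $G_i$, $i=1,\ldots,m$, generates the unary part $L(G)\cap\{a_i\}^*$ (with $h$ variables). Since $m$ is fixed, $G_0$ still has $O(h)$ variables.

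For the unary parts, I would handle each $G_i$ with the two-way machinery rather than the one-way machinery used in Theorem~\ref{thm:CNFG_to_DFA}. Concretely, Theorem~\ref{thm:uCFGto2DFA} converts each unary $h$-variable \cnfg~$G_i$ into an equivalent halting \twdfa\ with at most $(2^{2h-1}+1)^2+1=2^{O(h)}$ states. This is the key saving: whereas the one-way determinization of the unary part costs $2^{\Theta(h^2)}$ states via Theorem~\ref{thm:PSW2002}, routing through Chrobak normal form and Theorem~\ref{thm:u1NFAto2DFA} brings the unary cost down to a single exponential $2^{O(h)}$.

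For the nonunary part, I would apply Theorem~\ref{thm:nonunary_CFG_to_DFA} to $G_0$, obtaining a Parikh equivalent \owdfa~$A_{\rm non}$ with $2^{O(h)}$ states; this is already one-way deterministic, and as noted in Section~\ref{sec:constructions} it can be regarded as a (halting) \twdfa\ at the cost of one extra state. It then remains to combine the $m+1$ two-way machines---the \twdfas\ for the unary parts together with $A_{\rm non}$---into a single \twdfa\ accepting the union of their languages. Here I would use precisely the sequential union construction for halting \twdfas\ described at the end of Section~\ref{sec:constructions}, whose state cost is (at most a constant times) the \emph{sum} of the component sizes, not their product. Since each component has $2^{O(h)}$ states and there are $m+1=O(1)$ of them, the resulting \twdfa\ has $2^{O(h)}$ states and is Parikh equivalent to $G$.

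The main obstacle is not any single hard estimate but rather ensuring the union construction applies legitimately: it requires all component automata to be \emph{halting} two-way machines. The unary \twdfas\ from Theorem~\ref{thm:uCFGto2DFA} are explicitly halting, and $A_{\rm non}$ is one-way and hence trivially halting once converted in the stated form; if any residual nonhalting behavior arose one could invoke the linear-blowup result from~\cite{GeffertMereghettiPighizzini2007} to make every component halting, preserving the $2^{O(h)}$ bound since a linear factor and the additive $O(1)$-fold union both stay within a single exponential. I would also briefly note that the unary case $m=1$ is subsumed directly by Theorem~\ref{thm:uCFGto2DFA}, so the argument need only treat $m\geq 2$, matching the structure of the proof of Theorem~\ref{thm:CNFG_to_DFA}.
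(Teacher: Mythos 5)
Your proposal is correct and follows essentially the same route as the paper: split $G$ via Lemma~\ref{lemma:decCFL}, convert the unary parts with Theorem~\ref{thm:uCFGto2DFA} and the nonunary part with Theorem~\ref{thm:nonunary_CFG_to_DFA}, then take the union with the sequential (additive-cost) construction for halting \twdfas\ from Section~\ref{sec:constructions}. Your explicit attention to the halting requirement and the treatment of $A_{\rm non}$ as a one-way component are points the paper handles identically, so there is nothing to add.
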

\begin{proof}
  Even in this case, the construction is obtained by adapting the corresponding conversion
  into \owdfas\ (Theorem~\ref{thm:CNFG_to_DFA}).
  In particular, the construction uses the same steps~\ref{step:split}-\ref{step:last_merge} given
  in that proof, with some modifications in steps~\ref{step:unary} and~\ref{step:last_merge}, which are replaced
  by the following ones:
\begin{itemize}
  \item[\ref{step:unary}'.]
	The grammars $G_1, G_2, \ldots, G_m$ are converted into respectively equivalent unary \twdfas\ $A'_1, A'_2, \ldots, A'_m$.
\item[\ref{step:last_merge}'.]
	Finally, {}from $A_{\rm non}, A'_1, A'_2, \ldots, A'_m$, a \twdfa\ that accepting the language
	$L(A_{\rm non})\cup L(A'_1)\cup L(A'_2)\cup\cdots\cup L(A'_m)$ is obtained.
\end{itemize}
  Clearly, the \twdfas\ resulting from this procedure is Parikh equivalent to the original grammar~$G$.
  The costs of steps~\ref{step:split} and~\ref{step:CNFG_to_DFA} has been discussed in the proof
  of Theorem~\ref{thm:CNFG_to_DFA}. For the remaining steps:
\begin{itemize}
  \item[\ref{step:unary}'.]
	According to Theorem~\ref{thm:uCFGto2DFA}, for $i=1,\ldots,m$, the \twdfa~$A'_i$ has at most
	$(2^{2h-1}+1)^2+1$ states.
\item[\ref{step:last_merge}'.]
	We use the construction presented at the end of Section~\ref{sec:preliminaries}, to obtain a
	\twdfa\ whose number of states is the sum of the number of the states of $A_{\rm non}, A'_1, A'_2, \ldots, A'_m$,
	hence $2^{O(h)}$.\footnote{Explicitly mentioning the dependency on the alphabet size~$m$, we can give 
	a~$2^{O(hm^4)}$ bound. This derives from the size of the automaton~$A_{\rm non}$ (cf. Note~\ref{noteE}).
}
\qedhere
\end{itemize}
\end{proof}

	\section{Conclusion}
	\label{sec:conclu}

We proved that the state cost of the conversion of $n$-state \ownfas\ into Parikh equivalent \owdfas\ is 
$e^{\Theta(\sqrt{n \cdot \ln n})}$. This is the same cost of the conversion of unary \ownfas\
into equivalent \owdfas.
Since in the unary case Parikh equivalence is just equivalence, this result can be seen as
a generalization of the Chrobak conversion~\cite{Chrobak1986} to the nonunary case.
More surprisingly, such a cost is due to the unary parts of the languages. In fact, as shown
in Theorem~\ref{thm:nonunary_NFA_to_DFA}, for each $n$-state unary \ownfa\ accepting a language
which does not contain any unary word there exists a Parikh equivalent \owdfa\ with polynomially many states.
Hence, while for the transformation {}from \ownfas\ to equivalent \owdfas\ we need at least two different
symbols to prove the exponential gap {}from $n$ to $2^n$ states and we have a smaller gap
in the unary case, for Parikh equivalence the worst case is due only to unary words.

Even in the proof of our result for \cfgs\ (Theorem~\ref{thm:CNFG_to_DFA}), the separation between the
unary and nonunary parts was crucial. Also in this case, it turns out that the most expensive
part is the unary one.

On the other hand, in our conversions into Parikh equivalent \twdfas, the most expensive part turns out
to be the nonunary one.

\paragraph*{Acknowledgment}
The authors wish to thank Viliam Geffert who suggested to study the conversions into
Parikh equivalent \emph{two-way} automata. In particular, Theorem~\ref{thm:1NFAto2DFA}
is due to him. Many thanks also to Jeff Shallit who suggested to add Example~\ref{example}.

\bibliographystyle{alpha}
	\bibliography{pcomp}

\newcommand{\etalchar}[1]{$^{#1}$}
\begin{thebibliography}{GKK{\etalchar{+}}02}

\bibitem[A{\'E}I02]{AcetoEzikIngo02}
L.~Aceto, Z.~{\'E}sik, and A.~Ing{\'o}lfsd{\'o}ttir.
\newblock A fully equational proof of {P}arikh's theorem.
\newblock {\em RAIRO - Theoretical Informatics and Applications},
  36(2):129--153, 2002.

\bibitem[Chr86]{Chrobak1986}
M.~Chrobak.
\newblock Finite automata and unary languages.
\newblock {\em Theoretical Computer Science}, 47:149--158, 1986.
\newblock Corrigendum, ibid. 302 (2003) 497-498.

\bibitem[EGKL11]{EsGaKiLu2011}
J.~Esparza, P.~Ganty, S.~Kiefer, and M.~Luttenberger.
\newblock Parikh's theorem: A simple and direct automaton construction.
\newblock {\em Information Processing Letters}, 111(12):614--619, 2011.

\bibitem[Esp97]{Esparza97}
J.~Esparza.
\newblock Petri nets, commutative context-free grammars, and basic parallel
  processes.
\newblock {\em Fundam. Inform.}, 31(1):13--25, 1997.

\bibitem[Gaw11]{Gawrychowski2011}
P.~Gawrychowski.
\newblock {Chrobak Normal Form Revisited, with Applications}.
\newblock In {\em Implementation and Application of Automata}, volume 6807 of
  {\em Lecture Notes in Computer Science}, pages 142--153. Springer, 2011.

\bibitem[Gef07]{Geffert2007}
V.~Geffert.
\newblock {Magic numbers in the state hierarchy of finite automata}.
\newblock {\em Information and Computation}, 205(11):1652--1670, November 2007.

\bibitem[GKK{\etalchar{+}}02]{GoldstineKappesKLMW2002}
J.~Goldstine, M.~Kappes, C.M. Kintala, H.~Leung, A.~Malcher, and D.~Wotschke.
\newblock Descriptional complexity of machines with limited resources.
\newblock {\em J. Universal Computer Science}, 8(2):193--234, 2002.

\bibitem[GMP07]{GeffertMereghettiPighizzini2007}
V.~Geffert, C.~Mereghetti, and G.~Pighizzini.
\newblock {Complementing two-way finite automata}.
\newblock {\em Information and Computation}, 205(8):1173--1187, August 2007.

\bibitem[Gol77]{Goldstine77}
J.~Goldstine.
\newblock A simplified proof of {P}arikh's theorem.
\newblock {\em Discrete Mathematics}, 19(3):235 -- 239, 1977.

\bibitem[GR62]{GinsburgRice1962}
S.~Ginsburg and H.~G. Rice.
\newblock Two families of languages related to {ALGOL}.
\newblock {\em J. ACM}, 9:350--371, 1962.

\bibitem[Gru73]{Gruska1973}
J.~Gruska.
\newblock Descriptional complexity of context-free languages.
\newblock In {\em Proceedings of 2nd Mathematical Foundations of Computer
  Science}, pages 71--83. Mathematical Institute of Slovak Academy of Science,
  Bratislava, 1973.

\bibitem[GS66]{GinsburgSpanier66}
S.~Ginsburg and E.~H. Spanier.
\newblock Semigroups, {P}resburger formulas, and languages.
\newblock {\em Pacific J. Math.}, 16(2):285--296, 1966.

\bibitem[HU79]{HopcroftUllman1979}
J.~E. Hopcroft and J.~D. Ullman.
\newblock {\em Introduction to Automata Theory, Languages and Computation}.
\newblock Addison-Wesley, 1979.

\bibitem[Huy80]{Huynh80}
D.~T. Huynh.
\newblock The complexity of semilinear sets.
\newblock In {\em Automata, Languages and Programming}, volume~85 of {\em
  Lecture Notes in Computer Science}, pages 324--337. Springer, 1980.

\bibitem[KT10]{KopczynskiTo2010}
E.~Kopczy\'{n}ski and A.~W. To.
\newblock Parikh images of grammars: Complexity and applications.
\newblock In {\em Symposium on Logic in Computer Science}, pages 80--89, 2010.

\bibitem[LP12]{LavadoPighizzini2012}
G.~J. Lavado and G.~Pighizzini.
\newblock Parikh's theorem and descriptional complexity.
\newblock In {\em Proceedings of SOFSEM 2012}, volume 7147 of {\em Lecture
  Notes in Computer Science}, pages 361--372. Springer, 2012.

\bibitem[Lup63]{Lupanov1963}
O.B. Lupanov.
\newblock A comparison of two types of finite automata.
\newblock {\em Problemy Kibernet}, 9:321--326, 1963.
\newblock (in Russian). German translation: \"Uber den Vergleich zweier Typen
  endlicher Quellen, Probleme der Kybernetik 6, 329--335 (1966).

\bibitem[MF71]{MeyerFischer1971}
A.~R. Meyer and M.~J. Fischer.
\newblock Economy of description by automata, grammars, and formal systems.
\newblock In {\em FOCS}, pages 188--191. IEEE, 1971.

\bibitem[Moo71]{Moore1971}
F.R. Moore.
\newblock On the bounds for state-set size in the proofs of equivalence between
  deterministic, nondeterministic, and two-way finite automata.
\newblock {\em IEEE Transactions on Computers}, C-20(10):1211--1214, 1971.

\bibitem[Par66]{Parikh1966}
R.~J. Parikh.
\newblock On context-free languages.
\newblock {\em J. ACM}, 13(4):570--581, 1966.

\bibitem[PSW02]{PighizziniShallitWang2002}
G.~Pighizzini, J.~Shallit, and M.~Wang.
\newblock Unary context-free grammars and pushdown automata, descriptional
  complexity and auxiliary space lower bounds.
\newblock {\em J. Computer and System Sciences}, 65(2):393--414, 2002.

\bibitem[RS59]{RabinScott1959}
M.~Rabin and D.~Scott.
\newblock Finite automata and their decision problems.
\newblock {\em IBM J. Res. Develop.}, 3:114--125, 1959.

\bibitem[Sha08]{Shallit2008}
J.~O. Shallit.
\newblock {\em A Second Course in Formal Languages and Automata Theory}.
\newblock Cambridge University Press, 2008.

\bibitem[To10a]{To10b}
A.~W. To.
\newblock {\em Model Checking Infinite-State Systems: Generic and Specific
  Approaches}.
\newblock PhD thesis, School of Informatics, University of Edinburgh, August
  2010.

\bibitem[To10b]{To2010}
A.~W. To.
\newblock Parikh images of regular languages: Complexity and applications.
\newblock arXiv:1002.1464v2, February 2010.

\bibitem[VSS05]{VermaSeidlSchwentick05}
K.~Verma, H.~Seidl, and T.~Schwentick.
\newblock On the complexity of equational {H}orn clauses.
\newblock In {\em Automated Deductionâ CADE-20}, volume 3632 of {\em Lecture
  Notes in Computer Science}, pages 736--736. Springer, 2005.

\end{thebibliography}

\end{document}